\documentclass[11pt]{article}

\usepackage[a4paper,margin=1in]{geometry}

\usepackage[T1]{fontenc}
\usepackage[utf8]{inputenc}
\usepackage{lmodern}
\usepackage{microtype}

\usepackage{amsmath,amssymb,amsfonts,amsthm,mathtools}
\numberwithin{equation}{section}

\usepackage{graphicx}
\usepackage{subcaption}
\usepackage{booktabs}
\usepackage{threeparttable}

\usepackage{algorithm}
\usepackage{algorithmic}

\usepackage[numbers]{natbib}
\usepackage[hidelinks]{hyperref}

\usepackage{accents}

\theoremstyle{plain}
\newtheorem{theorem}{Theorem}[section]
\newtheorem{lemma}[theorem]{Lemma}

\theoremstyle{remark}

\usepackage[font=small,labelfont=bf]{caption}

\title{\Large\bfseries
Improving the adjusted Benjamini-Hochberg method using $e$-values in knockoff-assisted variable selection
}

\author{
Aniket Biswas\thanks{Corresponding author: biswasaniket44@gmail.com} \\
\small Department of Mathematics, Indian Institute of Technology Kharagpur, India
\and
Aaditya Ramdas \\
\small Department of Statistics and Data Science, Carnegie Mellon University, USA
}

\date{}

\begin{document}
\maketitle

\begin{abstract}
\noindent Considering the knockoff-based multiple testing framework of \citet{barber2015controlling}, we revisit the method of \citet{sarkar2022adjusting} and identify it as a specific case of an un-normalized $e$-value weighted Benjamini-Hochberg procedure. Building on this insight, we extend the method to use bounded $p$-to-$e$ calibrators that enable more refined and flexible weight assignments. Our approach generalizes the method of \citet{sarkar2022adjusting}, which emerges as a special case corresponding to an extreme calibrator. Within this framework, we propose three procedures: an $e$-value weighted Benjamini-Hochberg method, its adaptive extension using an estimate of the proportion of true null hypotheses, and an adaptive weighted Benjamini-Hochberg method. We establish control of the false discovery rate (FDR) for the proposed methods. While we do not formally prove that the proposed methods outperform those of \citet{barber2015controlling} and \citet{sarkar2022adjusting}, simulation studies and real-data analysis demonstrate large and consistent improvement over the latter in all cases, and better performance than the knockoff method in scenarios with low target FDR, a small number of signals, and weak signal strength. Simulation studies and a real-data application in HIV-1 drug resistance analysis demonstrate strong finite sample FDR control and exhibit improved, or at least competitive, power relative to the aforementioned methods.  
\end{abstract}

\paragraph{Keywords:}
Multiple testing; False discovery rate; Bonferroni--Benjamini--Hochberg method; $E$-value; Linear regression.

\paragraph{MSC (2020):}
62F99; 62J99; 62P10.

\section{Introduction}

Let us consider the classical linear regression model with the response $Y\in \mathbb{R}^n$ and the full column-rank non-stochastic design matrix $X$ of dimension $n\times m$
\begin{equation}\label{eq:regmodel}
Y=X\beta+\varepsilon,
\end{equation}
where $m\leq n$, $\beta=(\beta_1, \beta_2, ..., \beta_m)\in \mathbb{R}^m$ is the unknown vector of regression coefficients and the error vector $\varepsilon\sim N_n(0,\sigma^2I_n)$ with unknown $\sigma^2>0$. For the model (\ref{eq:regmodel}), the ordinary least square estimate of $\beta$, $\hat{\beta}=\Sigma^{-1}X^\top Y\sim N_m(\beta,\sigma^2 \Sigma)$, $\Sigma=X^\top X$, is the best linear unbiased estimate of $\beta$. In many modern-day applications, a large number of independent variables are available for modelling the response. However, there are typically just a few relevant independent variables worth consideration in the model. This problem of variable selection for (\ref{eq:regmodel}) is formed as a multiple testing problem
\begin{equation}\label{eq:testbeta}
    H_j:\beta_j=0\qquad\textrm{vs.}\qquad K_j:\beta_j\neq 0,\quad j=1,2,...,m.
\end{equation}
The independent variables corresponding to the false $H_j$'s in (\ref{eq:testbeta}) are regarded as important variables and retained in the model. The exclusion of the other independent variables make the model (\ref{eq:regmodel}) parsimonious and easily interpretable. For the $j$-th test in (\ref{eq:testbeta}), marginal $t$-test based on $\hat{\beta}_j$ is optimal \citep{lehmann2005testing}, for each $j=1,2,...,m$. However, the problem of variable selection demands control of an overall measure of Type-I error for the set of tests in (\ref{eq:testbeta}) \citep{barber2015controlling}. The false discovery rate (FDR) \citep{benjamini1995controlling} is considered an appropriate choice for the overall measure of Type-I error in the literature. However, the classical Benjamini-Hochberg (BH) method of \citet{benjamini1995controlling} is not appropriate here due to the arbitrary dependence structure among the test-statistics based on $\hat{\beta}$ \citep{sarkar2022adjusting}. 

Considering $n\geq 2m$, \citet{barber2015controlling} constructed a knockoff design matrix $\tilde{X}$ such that $\tilde{X}^\top \tilde{X}=\Sigma$ and $X^\top\tilde{X}=2\Sigma-D$, where $D$ is a diagonal matrix satisfying the condition that $\Sigma-D$ is positive definite. In their seminal work, \citet{barber2015controlling} detail the construction of $D$, $\tilde{X}$ and propose an innovative FDR controlling method. We refer to this $p$-value free method as knockoff method in this work. \citet{sarkar2022adjusting} use $X$, $\tilde{X}$ and $D$ to get back to the classical $p$-value based approach of FDR control for the problem of variable selection. The authors propose two independent estimators of $\beta$:
\begin{eqnarray*}
    & &\hat{\beta}^{(1)}=(2\Sigma-D)^{-1} (X+\tilde{X})^\top Y\sim N_m(\beta,2\sigma^2(2\Sigma-D)^{-1}),\\
    & &\hat{\beta}^{(2)}=D^{-1} (X-\tilde{X})^\top Y\sim N_m(\beta,2\sigma^2 D^{-1}).
\end{eqnarray*}
From these two sets of estimators, the following sets of statistics can be formulated to test $H_j$ for each $j=1,2,...,m$:
\begin{eqnarray*}
    & & T^{(1)}=(T^{(1)}_1,T^{(1)}_2, ..., T^{(1)}_m)=\frac{1}{\hat{\sigma}\sqrt{2}}[\textrm{diag}\{(2\Sigma-D)^{-1}\}]^{-\frac{1}{2}} \hat{\beta}^{(1)},\\
    & & T^{(2)}=(T^{(2)}_1,T^{(2)}_2, ..., T^{(2)}_m)=\frac{1}{\hat{\sigma}\sqrt{2}} D^{\frac{1}{2}}\hat{\beta}^{(2)}.
\end{eqnarray*}
Here we take $\hat{\sigma}$ to be the root mean square of the residuals obtained from the linear regression of the response vector $Y$ with the original design matrix $X$. The degrees of freedom used here is $\nu=n-2m$ when $n>2m$ and $\nu=n-m$ when $m<n\leq 2m$. The two set of $p$-values, corresponding to the testing problems in (\ref{eq:testbeta}), are obtained from $T^{(1)}$ and $T^{(2)}$ as
\begin{eqnarray*}
    P^{(1)}_j=2\times \Pr(\mathcal{T}_\nu> |T^{(1)}_j|)\quad \textrm{and}\quad P^{(2)}_j=2\times \Pr(\mathcal{T}_\nu> |T^{(2)}_j|),\quad j=1,2,...,m,
\end{eqnarray*}
where $\mathcal{T}_\nu$ denotes the $t$-distributed random variable with degrees of freedom $\nu$. \citet{sarkar2022adjusting} used the first set of $p$-values $P^{(1)}=(P^{(1)}_1, P^{(1)}_2, ..., P^{(1)}_m)$ in a Bonferroni-type method with a common threshold and then used the second set of $p$-values $P^{(2)}=(P^{(2)}_1, P^{(2)}_2, ..., P^{(2)}_m)$ in combination with the first in the BH method to get control over the FDR. We call this method the Bonferroni-Benjamini-Hochberg (Bon-BH) method. It is important to note that if $\sigma$ is known, or conditional on the estimate $\hat{\sigma}$, the components of $P^{(2)}$ are independent, and they are also independent of the components of $P^{(1)}$. This assumption is crucial for establishing the asymptotic control of the FDR by the adaptive FDR-controlling methods we propose in Section~3. For finite-sample FDR control by a non-adaptive method proposed in Section~3, the exact dependence structure has been explicitly utilized in the proof. The Bon-BH method achieves higher power than the knockoff filter under weaker signals and low target FDR levels, while maintaining comparable performance in other scenarios. Thus, it offers a competitive alternative under a range of conditions.

The $e$-value is a recently developed inferential tool that unifies concepts such as betting scores, likelihood ratios, Bayes factors, and stopped super-martingales \citep{shafer2021testing, vovk2021evalues, grunwald2024safe, grunwald2024beyond, howard2021time}.
A $p$-value can be converted to an $e$-value using appropriate calibrators \citep{vovk2021evalues}.
Lists of $e$-values can serve as the input of multiple testing procedures \citep{wang2022false} or they can also be incorporated to $p$-value based multiple testing methods as un-normalized weights \citep{ignatiadis2024evalues}. In the present work, the use of $e$-values as weights in the $p$-value based BH method ($ep$-BH method) is of particular interest. When the $e$-values are bounded, they can be standardized to form a set of weights summing to $m$, which can then be used as $p$-value weights in the weighted BH procedure \citep{genovese2006false} or its adaptive variants \citep{Habiger2017adaptive, ramdas2019unified, biswas2023new}.

The main idea of this paper is to reinterpret two-stage multiple testing procedures through the framework of $p$-to-$e$ calibration. From this perspective, the method of \citet{sarkar2022adjusting} arises from a specific and extreme choice of a bounded calibrator applied to the first-stage $p$-values. This insight allows their procedure to be viewed as a special case of our general approach. By replacing this extreme calibration with a more regular bounded calibrator, we obtain methods that maintain theoretical validity and exhibit improved empirical behavior. This viewpoint motivates the proposed methodology and is formalized in Section~4.

\noindent\textbf{Contributions of the paper.}
The main contributions of this work are summarized as follows:
\begin{itemize}
    \item We propose a unified knockoff-assisted multiple testing framework based on $p$-to-$e$ calibration, which enables the construction of valid two-stage variable selection procedures.
    \item We introduce three knockoff-assisted methods with finite-sample or asymptotic FDR guarantees, including a true null proportion–adaptive and a weighted variant.
    \item We show that the Bon-BH and adaptive Bon-BH procedures of \citet{sarkar2022adjusting} arise as special cases of our framework corresponding to a specific bounded calibrator, and we propose a continuous bounded alternative that exhibits favorable empirical performance.
\end{itemize}

The present work not only extends the new line of research combining the seminal ideas of knockoff and $p$-value based multiple testing methods, it also incorporates the recently introduced idea of $e$-values into it. 

The rest of the article is organized as follows. Section~2 briefly reviews some of the existing multiple testing procedures, knockoff-based and knockoff-adjusted variable selection methods. Section~3 presents the proposed methods and establishes their FDR control properties. Section~4 places the methods of \citet{sarkar2022adjusting} within our framework and discusses the role of bounded calibrators. Section~5 presents simulation results on FDR control and power. Section~6 illustrates the methods using a real HIV-1 drug resistance dataset.

\section{Review of related methods}

In this section, we begin with a brief overview of foundational concepts and procedures in multiple testing, including the FDR, the BH method, its adaptive version using Storey’s estimate of the proportion of true null hypotheses, and the weighted BH method. These preliminaries provide the necessary background for discussion on the three general multiple testing procedures that are closely related to our proposed methods: the \(ep\)-BH method, its adaptive variant, and the adaptive weighted BH method.

We then review two related variable selection procedures that also aim to control the FDR: the knockoff method, which is free of \(p\)-values, and the knockoff-assisted Bon-BH method along with its adaptive extension. While the first group comprises general procedures applicable across multiple testing contexts, the second group is specifically designed for variable selection framed through testing of hypotheses.

In this work, we adapt the general procedures to variable selection problems, and our results demonstrate that the resulting methods are not only competitive but also offer promising directions for future development.

\subsection{General procedures}

A $p$-value $P$ corresponding to a null hypothesis $H$ is a $[0,1]$-valued random variable such that $\Pr(P \leq t) \leq t$ for all $t \in [0,1]$, under every distribution in $H$. Similarly, an $e$-value $S$ for a null hypothesis $H$ is a $[0, \infty]$-valued random variable such that $E(S) \leq 1$ under every distribution in $H$. In the case of a simple null hypothesis—that is, when $H$ consists of a single distribution—these conditions can be written as $\Pr_H(P \leq t) \leq t$ for $p$-values and $\mathbb{E}_H(S) \leq 1$ for $e$-values. For composite null hypotheses, where $H$ is a set of distributions, the validity conditions must hold uniformly over all distributions in $H$. Opposite to the case of a $p$-value, a large $e$-value shows evidence against the null hypothesis $H$. 

Consider a general multiple testing problem involving \( m \) null hypotheses:
\(
H_1, H_2, \ldots, H_m.
\)
For each hypothesis \( H_j \), \( j = 1, 2, \ldots, m \), suppose we have access to both a \( p \)-value \( P_j \in [0,1] \) and an \( e \)-value \( S_j \in [0,\infty) \). The collections \( \{P_1, P_2, \ldots, P_m\} \) and \( \{S_1, S_2, \ldots, S_m\} \) summarize the evidence against the null hypotheses in the traditional \( p \)-value and the \( e \)-value testing frameworks, respectively.  

Let \( R \) denote the number of rejections made by a multiple testing procedure, and \( V \) the (unobserved) number of false rejections, i.e., true nulls that are rejected. The false discovery proportion is defined as
\[
\mathrm{FDP} = \frac{V}{R \vee 1} = 
\begin{cases}
V/R, & \text{if } R \geq 1, \\
0, & \text{if } R = 0.
\end{cases}
\] 
Since the proportion of false discovery is unobservable, procedures typically aim to control its expectation, the FDR.

Let \( P_{(1)} \leq P_{(2)} \leq \cdots \leq P_{(m)} \) denote the ordered \( p \)-values, and let \( H_{(j)} \) be the null hypothesis corresponding to \( P_{(j)} \), for each \( j \in \{1, 2, \ldots, m\} \). 

The Benjamini–Hochberg procedure, which controls the FDR at level \( \alpha \in (0,1) \), proceeds as follows: identify
\[
j_0 = \max \left\{ j \in \{1,2,\ldots,m\} : P_{(j)} \leq \frac{\alpha j}{m} \right\},
\]
and reject all hypotheses \( H_{(j)} \) for \( j \leq j_0 \).

Let \(\mathcal{N}\) denote the set of true null hypotheses, \(m_0 = \text{Card}(\mathcal{N})\) be the number of true nulls, and \(\pi_0 = m_0/m\) denote the proportion of true null hypotheses. For a suitably chosen threshold \(\lambda \in (0,1)\), suppose that \(P(P_j > \lambda) = 0\) for each \(j \notin \mathcal{N}\). Then, assuming that \(P_j\) is uniformly distributed for each $j\in\mathcal{N}$, the estimator for \(\pi_0\) proposed by \citet{storey2004strong} is given by
\begin{equation*}
    \hat{\pi}_0 = \frac{1 + \sum_{j=1}^m \mathbb{I}(P_j > \lambda)}{m(1 - \lambda)}.
\end{equation*}
Adaptive BH procedures identify $j_0=\max\{j\in \{1,2,\ldots,m\}:P_{(j)}\leq (\alpha j)/(\hat{\pi}_0 m)\}$ and rejects $\{H_{(j)}:j\leq j_0\}$. \citet{benjamini2006adaptive} proved that the adaptive BH procedure controls the FDR at $\alpha$. 

When the input to the BH method is taken as the transformed set \(\{P_1/S_1, P_2/S_2, \ldots,\) \(P_m/S_m\}\) instead of the usual \(\{P_1, P_2, \ldots, P_m\}\), the resulting procedure is known as the \(ep\)-BH method of \citet{ignatiadis2024evalues}. The adaptive $ep$-BH method, referred to as the \(ep\)-Storey method by \citet{ignatiadis2024evalues}, uses the modified input
\[
\left\{ \frac{\hat{\pi}_0 P_1}{\mathbb{I}(P_1 \leq \lambda) S_1}, \frac{\hat{\pi}_0 P_2}{\mathbb{I}(P_2 \leq \lambda) S_2}, \ldots, \frac{\hat{\pi}_0 P_m}{\mathbb{I}(P_m \leq \lambda) S_m} \right\}
\]
in place of the original \(p\)-values in the BH method. \citet{ignatiadis2024evalues} showed that the ep-BH method controls the false discovery rate at level \(\alpha\) when the null \(p\)-values are positively regression dependent, and that the ep-Storey method achieves the same guarantee under independence of the null \(p\)-values. In both cases, the result holds provided that the null \(p\)-values are independent of the corresponding \(e\)-values.

Let \(\{W_1, W_2, \ldots, W_m\}\) be a set of prior \(p\)-value weights such that \(\sum_{j=1}^m W_j = m\). Ideally, \(W_j\) should be large when \(H_j\) is false and small otherwise, but their specification remains a guess. When the input to the BH method is taken as the transformed set \(\{P_1/W_1, P_2/W_2, \ldots, P_m/W_m\}\) instead of the usual \(\{P_1, P_2, \ldots, P_m\}\), the resulting procedure is known as the weighted BH method of \citet{genovese2006false}.

Unlike the $ep$-BH method, the weighted version requires that the weights sum to \(m\), while no such restriction exists for \(e\)-values. When using e-values, the weights can be constructed as $W_j = S_j$ or as \(W_j = mS_j/\sum_{k=1}^m S_k\), for \(j = 1, 2, \ldots, m\).

If $E(W_j)=\mu_0\leq 1$ for each $j\in\mathcal{N}$, the weighted BH method controls the FDR at \(\delta_0\alpha\), where \(\delta_0 = \pi_0\mu_0 \leq 1\). An adaptive weighted Benjamini-Hochberg procedure, in which the parameter $\delta_0$ is estimated through data-driven weights, has been studied in earlier works; see \citet{ramdas2019unified}, and \citet{biswas2023new}. In this approach, \(\delta_0\) is estimated by
\begin{equation*}
    \hat{\delta}_0 = \frac{\max\{W_1, W_2, \ldots, W_m\} + \sum_{j=1}^m W_j\mathbb{I}(P_j > \lambda)}{m(1 - \lambda)},
\end{equation*}
and the BH procedure is then applied to the adjusted set
\(\{\hat{\delta}_0 P_1/W_1, \hat{\delta}_0 P_2/W_2, \ldots, \hat{\delta}_0 P_m/W_m\}\). This controls the FDR at level \(\alpha\) under the assumption that the set of \(p\)-values \(\{P_j : j \in \mathcal{N}\}\) are independent, and it is independent of the corresponding weights \(\{W_j : j \in \mathcal{N}\}\).

\subsection{Knockoff-filter and knockoff-assisted procedures}

We consider the regression model in (\ref{eq:regmodel}) under the assumption that $n \geq 2m$, and fix the FDR level at $\alpha \in (0,1)$. 

\textbf{Knockoff method:} We briefly outline the knockoff method of \citet{barber2015controlling}. The construction and selection steps are as follows:

\begin{itemize}
    \item \textit{Knockoff construction}: As introduced in Section~1, the knockoff matrix \(\tilde{X}\) is constructed via
    \[
    \tilde{X} = X \Sigma^{-1} (\Sigma - D) + \tilde{U} (2D - D\Sigma^{-1} D)^{1/2},
    \]
    where \(\tilde{U} \in \mathbb{R}^{n \times m}\) has orthonormal columns orthogonal to the column space of \(X\), and the matrix square root exists due to specific conditions on \(D\) \citep{barber2015controlling}.

    \item \textit{Augmented model fitting}: Augmenting the knockoff matrix to the original design matrix we obtain $(X,\tilde{X})$. For each \(j = 1, 2, \ldots, m\), the coefficients corresponding to \(X_j\) and \(\tilde{X}_j\) are estimated using a regularized method such as Lasso.

    \item \textit{Feature importance statistics}: Define \(L_j\) as a statistic measuring the importance of \(X_j\), commonly the regularization level at which it enters the Lasso path. Fitting the augmented model yields \(m\) pairs \((L_j, \tilde{L}_j)\).

    \item \textit{Knockoff statistics}: For each \(j = 1, 2, \ldots, m\), define
    \[
    V_j = (L_j \vee \tilde{L}_j)\left[2\mathbb{I}\{L_j > \tilde{L}_j\} - 1\right].
    \]

    \item \textit{Thresholding and selection}: Hypothesis \(H_j\) is rejected if \(V_j > T\), where
    \[
    T = \min\left\{t \in \mathcal{V} : \frac{1 + \text{Card}\{j : V_j \leq -t\}}{\text{Card}\{j : V_j \geq t\} \vee 1} \leq \alpha \right\},
    \]
    or \(T = +\infty\) if the set is empty. Here, \(\mathcal{V} = \{|V_j| : j = 1, 2, \ldots, m\} \setminus \{0\}\).
\end{itemize}
This procedure controls the FDR at the target level \(\alpha\). Extensions to the regime \(m < n \leq 2m\) are also discussed in \citet{barber2015controlling}.

\textbf{Knockoff-assisted methods:} We now describe two knockoff-assisted multiple testing procedures proposed by \citet{sarkar2022adjusting}, which modify the BH procedure using a preliminary screening step and apply it at a reduced significance level.

\begin{itemize}
    \item \textbf{Bonferroni–Benjamini–Hochberg procedure}: This procedure computes
    \[
    \tilde{Q}_j = \begin{cases}
    1, & \text{if } P^{(1)}_j > \sqrt{\alpha}, \\
    P^{(2)}_j, & \text{if } P^{(1)}_j \leq \sqrt{\alpha},
    \end{cases}
    \quad \text{for } j = 1, 2, \dots, m.
    \]
    The BH procedure is then applied to the vector \(\tilde{Q} = (\tilde{Q}_1, \dots, \tilde{Q}_m)\) at the reduced level \(\sqrt{\alpha}\).

    \item \textbf{Adaptive Bonferroni–Benjamini–Hochberg procedure}: This extension estimates the proportion of true null hypotheses \(\pi_0 = m_0 / m\), where \(m_0 = \text{Card}\{j \in \{1, 2, \ldots, m\} : \beta_j = 0\}\). Following \citet{storey2004strong}, \(\pi_0\) is estimated by
    \[
    \hat{\pi}_0 = \frac{1 + \sum_{j=1}^m \mathbb{I}(P_j^{(2)} > \lambda)}{m(1 - \lambda)},
    \]
    for some tuning parameter \(\lambda \in (\sqrt{\alpha}, 1)\). The modified $p$-values are
    \[
    Q^*_j = \begin{cases}
    1, & \text{if } P^{(1)}_j > \sqrt{\alpha}, \\
    \hat{\pi}_0 P^{(2)}_j, & \text{if } P^{(1)}_j \leq \sqrt{\alpha},
    \end{cases}
    \quad \text{for } j = 1, 2, \dots, m.
    \]
    The BH procedure is then applied to the vector \(Q^* = (Q^*_1, \dots, Q^*_m)\) at level \(\sqrt{\alpha}\).
\end{itemize}
 
\section{Proposed methods}

The notations used in this section are consistent with Section~1 and Section~2.2.
From the first set of $p$-values $P^{(1)}$, we obtain the corresponding $e$-values $S^{(1)} = (S^{(1)}_1, S^{(1)}_2, \ldots, S^{(1)}_m)$, where $S^{(1)}_j = g(P^{(1)}_j)$ for $j = 1, 2, \ldots, m$, and $g$ is a $p$-to-$e$ calibrator \citep{vovk2021evalues}.
We discuss $p$-to-$e$ calibrators, their properties, and suggest a suitable one for our purpose in Section~4.

The proposed methods follow a common structure. Information from the first set of $p$-values is transformed to $e$-values, which are then used to modulate the significance of the second-stage $p$-values. The three methods differ in how this information is incorporated and whether adaptation to the proportion of true null hypotheses is employed. We present the methods in increasing order of complexity, starting from a basic $ep$-BH procedure and moving toward adaptive and weighted extensions.

We begin with a baseline procedure that directly applies the $ep$-BH idea. The $e$-values constructed from the first set of $p$-values act as prior weights for the second set of $p$-values. Intuitively, hypotheses supported by stronger first-stage evidence receive greater emphasis, while the classical BH procedure is applied for decision making. We identify this procedure by \textit{Method 1}. The detailed steps of the procedure are given below in Algorithm~1.

\vspace{0.4em}
\hrule height 1pt
\vspace{0.25em}
\noindent\textbf{Algorithm 1: Screening based on $ep$-BH procedure}
\vspace{0.35em}
\hrule height 1pt
\vspace{0.3em}
\begin{enumerate}
\item \textbf{Input:} Two sets of $p$-values, $\{P_j^{(1)}\}$ and $\{P_j^{(2)}\}$ for $j = 1, \dots, m$, and FDR level $\alpha \in (0, 1)$.

\item Compute $e$-values using the calibrator $g$:
\[
S_j^{(1)} = g\!\left(P_j^{(1)}\right), \qquad j = 1, \dots, m.
\]

\item Construct the adjusted $p$-values:
\[
\tilde{P}_j = \frac{P_j^{(2)}}{S_j^{(1)}}, \qquad j = 1, \dots, m.
\]

\item Sort the $\tilde{P}_j$'s in increasing order:
\[
\tilde{P}_{(1)} \leq \tilde{P}_{(2)} \leq \cdots \leq \tilde{P}_{(m)}.
\]
Let $H_{(j)}$ be the null hypothesis corresponding to $\tilde{P}_{(j)}$ for each $j = 1,2,\dots,m$.

\item Determine the largest index
\[
\tilde{R}
= \max \left\{ j \in \{1, \dots, m\} : \tilde{P}_{(j)} \leq \frac{j\alpha}{m} \right\},
\]
if such a $j$ exists; otherwise, set $\tilde{R} = 0$.

\item \textbf{Output:} Reject the null hypotheses $H_{(j)}$ for all $j \leq \tilde{R}$.
\end{enumerate}
\vspace{0.4em}
\hrule height 1.2pt
\vspace{0.5em}

Summarily, \textit{Method 1} is the unnormalized $e$-value weighted BH procedure based on the work of \citet{ignatiadis2024evalues}. It uses the ratio of $P_j^{(2)}$ to $S_j^{(1)}$ and applies the classical BH cutoff for decision making. 

\textit{Method~1} does not require estimating the proportion of true null hypotheses and therefore it has exact FDR control under the model considered. However, as with the classical BH procedure, this method can be conservative when the fraction of non-null hypotheses is small.

To address the potential conservativeness of \textit{Method~1}, we next consider an adaptive extension in \textit{Method~2}. The key idea is to estimate the proportion of true null hypotheses using the second set of $p$-values and incorporate this estimate into the BH method. This leads to an adaptive $ep$-BH method that can improve power while retaining asymptotic FDR control. The detailed steps of \textit{Method~2} are given below in Algorithm~2. 

\vspace{0.4em}
\hrule height 1pt
\vspace{0.25em}
\noindent\textbf{Algorithm 2: Screening based on adaptive $ep$-BH procedure}
\vspace{0.35em}
\hrule height 1pt
\vspace{0.3em}

\begin{enumerate}
\item \textbf{Input:} Two sets of $p$-values, $\{P_j^{(1)}\}$ and $\{P_j^{(2)}\}$ for $j = 1, \dots, m$, FDR level $\alpha \in (0, 1)$, and a tuning parameter $\lambda \in (0, 1)$.

\item Estimate the proportion of true null hypotheses:
\[
\hat{\pi}_0
= \frac{1 + \sum_{j=1}^m \mathbb{I}\!\left(P_j^{(2)} > \lambda\right)}{m(1 - \lambda)}.
\]

\item Compute $e$-values using the calibrator $g$:
\[
S_j^{(1)} = g\!\left(P_j^{(1)}\right), \qquad j = 1, \dots, m.
\]

\item Compute adjusted $p$-values:
\[
P_j^*
= \frac{\hat{\pi}_0 \cdot P_j^{(2)}}{\mathbb{I}\!\left(P_j^{(2)} \leq \lambda\right) \cdot S_j^{(1)}},
\qquad j = 1, \dots, m.
\]

\item Sort the $P_j^*$'s in increasing order:
\[
P_{(1)}^* \leq P_{(2)}^* \leq \cdots \leq P_{(m)}^*.
\]
Let $H_{(j)}$ be the null hypothesis corresponding to $P_{(j)}^*$ for each $j = 1,2,\dots,m$.

\item Determine the largest index
\[
R^*
= \max \left\{ j \in \{1, \dots, m\} : P_{(j)}^* \leq \frac{j\alpha}{m} \right\},
\]
if such a $j$ exists; otherwise, set $R^* = 0$.

\item \textbf{Output:} Reject the null hypotheses $H_{(j)}$ for all $j \leq R^*$.
\end{enumerate}

\vspace{0.4em}
\hrule height 1.2pt
\vspace{0.5em}

Due to the additional step involving the estimation of $\pi_0$ in \textit{Method~2}, exact finite-sample FDR control cannot be established theoretically under the considered framework, although simulation results indicate that the FDR remains controlled in finite samples. Nevertheless, under the framework considered here, consistency of the variance estimator ensures that asymptotic FDR control is preserved.

In Method~3, we first standardize the $e$-values obtained from the first set of $p$-values so that the resulting weights sum to $m$. These standardized $e$-values are then used as $p$-value weights. The proportion of true null hypotheses is estimated using a weighted version of Storey’s estimator, which incorporates these weights. Finally, an adaptive weighted BH procedure is applied to the second set of $p$-values using the estimated null proportion and the constructed weights. The detailed steps of \textit{Method~3} are given below in Algorithm~3.

\vspace{0.4em}
\hrule height 1pt
\vspace{0.25em}
\noindent\textbf{Algorithm 3: Screening based on adaptive weighted BH procedure}
\vspace{0.35em}
\hrule height 1pt
\vspace{0.3em}

\begin{enumerate}
\item \textbf{Input:} The $p$-values $\{P_j^{(1)}\}, \{P_j^{(2)}\}$ for $j = 1, \dots, m$, FDR level $\alpha \in (0, 1)$, and a tuning parameter $\lambda \in (0, 1)$.

\item Compute weights based on the calibrator:
\[
W_j^{(1)} = \frac{m \cdot S_j^{(1)}}{\sum_{k=1}^m S_k^{(1)}},
\qquad j = 1, \dots, m.
\]

\item Estimate the proportion of null hypotheses:
\[
\hat{\delta}_0
= \frac{\max_j W_j^{(1)} + \sum_{j=1}^m W_j^{(1)} \cdot \mathbb{I}\!\left(P_j^{(2)} > \lambda\right)}{m(1 - \lambda)}.
\]

\item Compute $e$-values using the calibrator $g$:
\[
S_j^{(1)} = g\!\left(P_j^{(1)}\right), \qquad j = 1, \dots, m.
\]

\item Compute the weighted adjusted $p$-values:
\[
P_j^+
= \frac{\hat{\delta}_0 \cdot P_j^{(2)}}{W_j^{(1)}},
\qquad j = 1, \dots, m.
\]

\item Sort the $P_j^+$ values:
\[
P_{(1)}^+ \leq P_{(2)}^+ \leq \cdots \leq P_{(m)}^+.
\]
Let $H_{[j]}$ be the null hypothesis corresponding to $P_{(j)}^+$ for each $j = 1,2,\dots,m$.

\item Identify the largest index
\[
R^+
= \max \left\{ j \in \{1, \dots, m\} :
P_{(j)}^+ \leq \min\!\left( \hat{\delta}_0 \lambda, \frac{j\alpha}{m} \right) \right\},
\]
if such a $j$ exists; otherwise, set $R^+ = 0$.

\item \textbf{Output:} Reject the null hypotheses $H_{[j]}$ for all $j \leq R^+$.
\end{enumerate}

\vspace{0.4em}
\hrule height 1.2pt
\vspace{0.5em}

Although the construction of \textit{Method~3} differs from that of \textit{Method~2}, the two methods are expected to exhibit comparable performance in most of the practical settings. Differences may arise when the $e$-values obtained from the first set of $p$-values show substantial heterogeneity across hypotheses, in which case the normalization and weighting scheme in \textit{Method~3} can lead to slightly different prioritization of hypotheses compared to the direct scaling used in \textit{Method~2}. \textit{Method~3} also admits only asymptotic FDR control under the framework considered here. Simulation studies indicate that \textit{Method~3} maintains satisfactory FDR control even in finite samples.

The proposed methods are designed to control the false discovery rate under the modeling assumptions described in Section~1. Theorem~\ref{thm:combined_theorem} summarizes the FDR control properties of the proposed methods.

\begin{theorem}
\label{thm:combined_theorem}
Suppose that the model in (\ref{eq:regmodel}) holds. Then:
\begin{itemize}
    \item \textit{Method 1} controls the FDR at level $\pi_0 \alpha$.
    \item \textit{Method 2} asymptotically controls the FDR at level $\alpha$, that is, the FDR is bounded by $\alpha + o(1)$ as $\nu \to \infty$.
    \item \textit{Method 3} also asymptotically controls the FDR at level $\alpha$, with $o(1) \to 0$ as $\nu \to \infty$.
\end{itemize}
\end{theorem}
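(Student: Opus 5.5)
Throughout, we take $g$ to be a bounded $p$-to-$e$ calibrator: nonincreasing on $[0,1]$, with $\int_0^1 g(u)\,du\le 1$ and $g\le C<\infty$, as discussed in Section~4. We first handle \textit{Method 1} by conditioning on the first-stage $p$-values. The key structural fact is that $\hat\beta^{(1)}$ and $\hat\beta^{(2)}$ are jointly Gaussian with zero cross-covariance, since $(X+\tilde X)^\top(X-\tilde X)=\Sigma-(2\Sigma-D)+(2\Sigma-D)-\Sigma=0$. They are therefore independent. Moreover $\mathrm{Cov}(\hat\beta^{(2)})=2\sigma^2D^{-1}$ is diagonal, so the $T^{(2)}_j$ are conditionally independent given $\hat\sigma$. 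Since $\hat\sigma$ comes from the residual space of $X$, it is independent of both $\hat\beta^{(1)}$ and $\hat\beta^{(2)}$.

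Conditionally on $\hat\sigma$ and on $S^{(1)}$, the \textit{Method 1} procedure is a weighted BH method with fixed weights $S_j^{(1)}$ applied to independent $p$-values. A careful point is that given $\hat\sigma$ the null $P^{(2)}_j$ are not exactly uniform. We therefore avoid conditioning on $\hat\sigma$ and argue as follows.

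\begin{itemize}
\item Condition only on $\hat\beta^{(1)}$. Then $(P^{(2)}_j)_{j\in\mathcal N}$ are marginally valid $t$-based $p$-values.
\item Jointly they are functions of independent normals divided by a common $\hat\sigma$. Such a vector of $p$-values is PRDS (the multivariate $t$ with diagonal scale is positively dependent in the $|\cdot|$ orderings; one can check PRDS by conditioning on $\hat\sigma$ and using monotonicity in $\hat\sigma$).
\item By the $ep$-BH result of \citet{ignatiadis2024evalues} for PRDS $p$-values, with weights independent of the null $p$-values, $\mathrm{FDR}\le \frac{\alpha}{m}\sum_{j\in\mathcal N}\mathbb E[S^{(1)}_j]$.
\item Since $\mathbb E[S^{(1)}_j]=\int g\le 1$ for $j\in\mathcal N$, the bound is $\pi_0\alpha$.
\end{itemize}

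The PRDS verification is the delicate part. I will do it through the superuniformity lemma: condition on $\hat\sigma$ and on $P^{(2)}_{-j}$, write $\mathrm{FDR}=\sum_{j\in\mathcal N}\mathbb E\big[\mathbb I(P_j^{(2)}\le S_j\alpha R/m)/R\big]$, and exploit that the rejection set is monotone in each $p$-value.

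For \textit{Methods 2 and 3} we argue asymptotically. As $\nu\to\infty$, $\hat\sigma\to\sigma$ in probability, so the $t$-based $p$-values converge in total variation to their oracle counterparts computed with known $\sigma$. Call these $P^{(1),o}$ and $P^{(2),o}$.

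In the oracle model, the $P^{(2),o}_j$ are mutually independent, independent of $P^{(1),o}$, and exactly uniform under the null.

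\begin{itemize}
\item \textbf{Method 2 (oracle).} This is precisely the $ep$-Storey method, which has FDR at most $\alpha$ under independence of the null $p$-values and their independence from the $e$-values \citep{ignatiadis2024evalues}.
\item \textbf{Method 3 (oracle).} The weights $W^{(1)}$ are functions of $P^{(1),o}$ only, hence independent of the independent uniform null $P^{(2),o}$. The adaptive weighted BH result with estimator $\hat\delta_0$ and threshold $\min(\hat\delta_0\lambda,\cdot)$ \citep{ramdas2019unified,biswas2023new} gives FDR at most $\alpha$, conditionally on the weights and hence unconditionally.
\end{itemize}

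It remains to transfer these bounds from the oracle procedure to the actual one. The FDP is a bounded functional of the $2m$-dimensional statistic vector $(\hat\beta^{(1)},\hat\beta^{(2)},\hat\sigma)$, and the actual and oracle statistics differ only through $\hat\sigma/\sigma\to1$. The cleanest route is to condition on $\hat\sigma=s$. The conditional law of the actual $p$-values given $s$ equals the oracle law with a rescaled $\sigma$ and the $t$ cdf replacing the normal cdf. Its total variation distance from the oracle law is a continuous function of $s$ that vanishes at $s=\sigma$ as $\nu\to\infty$ (the $t_\nu$ cdf converges uniformly to $\Phi$).

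Since $0\le\mathrm{FDP}\le 1$, we get $|\mathrm{FDR}-\mathrm{FDR}^{o}|\le \mathbb E[d_{TV}(s)]\to0$ by dominated convergence. This yields $\mathrm{FDR}\le\alpha+o(1)$.

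The main obstacle is this last TV-continuity step. Total variation between products of $m$ laws can accumulate with $m$, so I would keep $m$ fixed as $\nu\to\infty$, which matches the statement's $o(1)$ in $\nu$.
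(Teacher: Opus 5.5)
\textbf{The gap: Method~1 and the claimed independence of the weights.} The failing step is where you apply the $ep$-BH theorem ``with weights independent of the null $p$-values.'' Conditioning on $\hat\beta^{(1)}$ does not freeze the weights. $T^{(1)}_j$ is divided by the same $\hat\sigma$ as $T^{(2)}_j$, so $S^{(1)}_j=g(P^{(1)}_j)$ is a function of $(\hat\beta^{(1)},\hat\sigma)$, not of $\hat\beta^{(1)}$ alone.

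A small $\hat\sigma$ raises every $S^{(1)}_j$ and lowers every null $P^{(2)}_j$ at the same time. The weights and the null $p$-values are therefore positively associated, in exactly the direction that inflates the FDR. PRDS of $P^{(2)}$ by itself says nothing about this.

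No refinement closes the gap, because the stated bound is false with a shared $\hat\sigma$, already for the calibrator $g_1$ the paper itself uses. Take $m=m_0=1$ and $g=g_1$ (Bon-BH).
\begin{itemize}
\item Method~1 rejects iff $P^{(1)}_1\le\sqrt\alpha$ and $P^{(2)}_1\le\sqrt\alpha$.
\item Given $\hat\sigma$, these two events are independent, each with probability $q(\hat\sigma)=2\{1-\Phi(c_\nu\hat\sigma/\sigma)\}$, where $c_\nu$ is the two-sided $t_\nu$ critical value at level $\sqrt\alpha$.
\item Marginally $\mathbb{E}\,q(\hat\sigma)=\sqrt\alpha$.
\item Hence $\mathrm{FDR}=\mathbb{E}\{q(\hat\sigma)^2\}>\{\mathbb{E}\,q(\hat\sigma)\}^2=\alpha=\pi_0\alpha$.
\end{itemize}
The inequality is strict because $q$ is strictly decreasing and $\hat\sigma$ is non-degenerate. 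The excess is of order $1/\nu$, which explains why it does not show up in the simulations.

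\textbf{The paper's proof and what your argument does establish.} The paper's own proof of this part rests on the same hidden independence. After conditioning on $S^{(1)}$, it averages $\hat\sigma$ against its unconditional density $h$; this is what makes $h^*$ integrate to one. Its final identity $\mathbb{E}_{\hat\sigma}F(S^{(1)}_j\alpha/m\mid\hat\sigma)=S^{(1)}_j\alpha/m$ also holds only if $S^{(1)}$ is independent of $\hat\sigma$.

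Your PRDS route does correctly prove $\mathrm{FDR}\le\frac{\alpha}{m}\sum_{j\in\mathcal N}\mathbb{E}S^{(1)}_j\le\pi_0\alpha$ once $S^{(1)}$ is independent of $(\hat\beta^{(2)},\hat\sigma)$. Two ways to get this:
\begin{itemize}
\item $\sigma$ known, as in Remark~1.
\item Studentize the first-stage statistics by a variance estimate independent of the one used for $P^{(2)}$, for instance by splitting the residual space.
\end{itemize}
Conditioning on $S^{(1)}$ then leaves weighted BH with fixed weights applied to PRDS, marginally uniform null $p$-values.

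Separately, $\hat\sigma$ is independent of $\hat\beta^{(1)}$ and $\hat\beta^{(2)}$ only if the residuals come from $[X,\tilde X]$, consistent with $\nu=n-2m$. Residuals from $X$ alone contain $\tilde U^\top Y$.

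\textbf{Methods~2 and~3.} Your argument here is sound and differs from the paper's. The paper conditions on $(S^{(1)},\hat\sigma)$ and propagates an $o(1)$ non-uniformity error through the Storey and weighted-Storey proofs.

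You instead condition on $\hat\sigma=s$. This makes $(P^{(1)},P^{(2)})$ a function of $\hat\beta$ alone, and you then compare with the known-$\sigma$ procedure. That handles the shared-$\hat\sigma$ dependence correctly, which is exactly why the asymptotic claims survive while the exact one does not.

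The bound $|\mathrm{FDR}-\mathrm{FDR}^{o}|\le\mathbb{E}\,d_{TV}(s)$ is valid, since the oracle $p$-values are independent of $\hat\sigma$. However, $d_{TV}\to0$ needs $m$ and the standardized signal strengths held fixed as $\nu\to\infty$. If the noncentralities grow with $n$, you should instead show that the actual and oracle procedures make identical decisions with probability tending to one.
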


\begin{proof}[Proof of Theorem~\ref{thm:combined_theorem}]
Provided in \emph{Appendix A}.
\end{proof}

\textbf{Remark 1.} When $\sigma^2$ is assumed to be known, the second set of $p$-values $P^{(2)}$ are independent among themselves and it is independent with the first set of $p$-values $P^{(1)}$. Consequently, $P^{(2)}$ is also independent of $S^{(1)}$ and $W^{(1)}$. Thus, the proof of Theorem~\ref{thm:combined_theorem} directly follow from Theorem 4.2 and Theorem 4.10 of \citet{ignatiadis2024evalues}, and from Theorem 3.1 of \citet{biswas2023new}.

\textbf{Remark 2.} The choice among the three proposed methods depends on the practitioner’s objective, FDR threshold to use and the expected signal structure. \textit{Method~1} guarantees exact FDR control in finite samples, and is therefore preferable when strict error control is a priority. When a moderate to large number of non-null hypotheses is expected, \textit{Method~2} or \textit{Method~3} may be used to improve power. \textit{Methods~2} and\textit{~3} exhibit largely comparable performance. However, in settings with a very large number of hypotheses, higher FDR threshold and sparse non-null signals, \textit{Method~3} can screen slightly more discoveries than \textit{Method~2}. This behavior may be attributed to adjustment of heterogeneity through standardization of the $e$-values and the use of weighted null proportion estimation in \textit{Method~3}.

\section{Bounded calibration}

\textbf{Calibration and admissibility:} As mentioned in Section~1, \(p\)-values can be transformed into \(e\)-values using \(p\)-to-\(e\) calibrators. A \(p\)-to-\(e\) calibrator is a decreasing function \(g : [0,1] \to [0,\infty]\) satisfying \(\int_0^1 g(t)\,dt \leq 1\). A calibrator \(g\) is called admissible if there does not exist another function \(h\) such that \(h(t) \geq g(t)\) for all \(t \in [0,1]\), with strict inequality for at least one \(t\). \citet{vovk2021evalues} show that \(g\) is admissible if \(g(0) = \infty\) and \(\int_0^1 g(t)\,dt = 1\). One such admissible calibrator is \(g(t) = t^{-1/2} - 1\), so that \(S = P^{-1/2} - 1\) is an \(e\)-value if \(P\) is a \(p\)-value. Another admissible family is \(g_\kappa(t) = \kappa t^{\kappa - 1}\) for \(\kappa \in (0,1)\), as well as its averaged version \(\int_0^1 \kappa t^{\kappa - 1} d\kappa\). Admissible calibrators are theoretically preferred in classical testing problems because they often correspond to likelihood-ratio-type statistics and are optimal in the sense that no other calibrator dominates them point-wise while preserving validity.

\textbf{Use of bounded calibrators:} Our context involves a meta-analysis-type multiple testing problem, where the \(e\)-values derived from the first set of \(p\)-values are not directly used for decision-making but rather act as weights to modulate the significance of the second set of \(p\)-values. In this setup, using bounded calibrators is not only reasonable but often desirable for the following reasons:
\begin{itemize}
\item \textit{Boundedness ensures stability:} In simulation experiments, we observed that the bounded calibrators avoid extreme values, which can otherwise introduce variance inflation in the combined \(p\)-values.
\item \textit{Variance-based justification:} Since \(\hat{\beta}^{(1)}\) tends to have higher variance than \(\hat{\beta}^{(2)}\), placing primary emphasis on \(P_j^{(1)}\) via large \(e\)-values may mislead the procedure. Instead, we use \(S_j^{(1)}\) as a controlled and supportive weight, letting \(P_j^{(2)}\) — based on more stable estimators — drive the multiple testing.
\item \textit{Procedure design:} The use of bounded calibrators enables the formulation of more refined strategies, such as \textit{Method 3}, which blends information across stages in a stable yet adaptive manner.
\end{itemize}
Therefore, while admissible calibrators are of greater importance for testing single hypotheses, in our current multiple testing setup — particularly when integrating two sources of evidence — the use of bounded admissible-like calibrators serves both theoretical consistency and practical robustness. 

\textbf{Bounded admissibility:} A bounded $p$-to-$e$ calibrator is a decreasing function $g:[0,1]\to[0,C]$ satisfying $\int_0^1 h(t)dt\leq 1$. A $p$-to-$e$ calibrator $g$ is defined to be bounded-admissible if it is bounded and there does not exist another $h:[0,1]\to[0,C]$ such that $h(t)\geq g(t)$ for all $t\in [0,1]$, with strict inequality for at least one $t$. Following \citet{vovk2021evalues}, it is easy to show that $g$ is bounded-admissible if $g(0)=C$ and $\int_0^1 g(t) dt=1$.

\textbf{A special case:} Here, we identify a specific calibrator that reveals the methods proposed by \citet{sarkar2022adjusting} are special cases of our general methods introduced in Section~4 under this particular choice.
Consider the bounded-admissible all-or-nothing calibrator  
\[
g_1(t) = \frac{1}{\sqrt{\alpha}} \cdot \mathbb{I}(t \leq \sqrt{\alpha}),
\]
which maps the first-stage \(p\)-values \(\{P_j^{(1)}\}_{j=1}^m\) to \(e\)-values \(S_j^{(1)} = g_1(P_j^{(1)})\). This function is bounded with range \([0, 1/\sqrt{\alpha}]\) and satisfies the admissibility condition of our framework.

Using this specific calibrator within \textit{Method 1}, the resulting transformed \(p\)-values coincide with the Bon-BH method:
\(
\tilde{P}_j = \tilde{Q}_j, \quad \text{for } j = 1, \dots, m,
\)
where \(\tilde{Q}_j\) is defined as in Section~2.2. Thus, Bon-BH arises as a special case of our general method.

Similarly, applying \textit{Method 2} with the same \(g_1\), and estimating \(\pi_0\) as in Storey’s method, the final adjusted \(p\)-values satisfy
\(
P_j^* = Q_j^*, \quad \text{for } j = 1, \dots, m,
\)
recovering the adaptive Bon-BH procedure described earlier.

More generally, by choosing a family of all-or-nothing calibrators of the form  
\[
g_r(t) = \frac{1}{\alpha^r} \cdot \mathbb{I}(t \leq \alpha^r), \quad \text{for any } r \in (0,1),
\]
one can recover the broader class of Bon-BH-type methods discussed in the supplementary material of \citet{sarkar2022adjusting}.

\textbf{Prescribed choice of calibrator:} An example of a continuous and bounded-admissible $p$-to-$e$ calibrator is given by
\[
g_2(t) = C(1 - t^a),
\]
where \(C > 1\) and \(a > 0\). To satisfy the admissibility (integral) condition, we set \(a = 1/(C - 1)\). We recommend using \(g_2\) in \textit{Method 1}, \textit{Method 2}, and \textit{Method 3}, particularly with
\(
C = 1/\alpha.
\)
This choice ensures that \(g_2\) satisfies the admissibility criterion while remaining continuous, enabling a smoother and more informative transfer of weights to the second set of $p$-values. Intuitively, when the first-stage $p$-value is close to one, the corresponding $e$-value is close to zero, and hence the hypothesis receives little support for rejection. On the other hand, when the first-stage $p$-value is close to zero, the resulting $e$-value approaches \(C\), so that the second-stage $p$-value is effectively compared against a threshold of order \(C\alpha\). Choosing \(C = 1/\alpha\) therefore allows a hypothesis with strong first-stage evidence to be rejected whenever the second-stage $p$-value is at most one, which is consistent with intuition from single-hypothesis testing. The constant \(C\) thus scales the $e$-values to ensure proper calibration under the target significance level \(\alpha\). This flexible structure has shown visible improvements over the Bon-BH procedure in both simulations and real data applications, making it a practically effective and theoretically sound alternative.

\section{Simulation study}

In our simulation setting, We consider a linear regression model with $n$ observations and $m$ covariates. Among the $m$ covariates, $k$ are relevant (non-null), and the remaining $m-k$ are null. The $n \times m$ design matrix $X$ is generated from a multivariate normal distribution with mean vector $0$ and an autoregressive correlation structure of order $1$ (AR(1)). Specifically, the covariance matrix $\Omega \in \mathbb{R}^{m \times m}$ has entries $\Omega_{ij} = \rho^{|i - j|}$ for some fixed $\rho \in (0, 1)$. Each row of $X$ is sampled independently from $N_m(0, \Omega)$. The columns of $X$ are then standardized to unit norm to form the matrix $\mathbf{Z}$, i.e.,
\[
Z_{ij} = \frac{X_{ij}}{\sqrt{(X^\top X)_{jj}}}.
\] 
The true coefficient vector $\beta \in \mathbb{R}^m$ is generated such that exactly $k$ coordinates are non-zero, each having magnitude $\gamma > 0$, and the remaining are set to zero. The locations of the non-zero entries are selected uniformly at random. Given $Z$ and $\beta$, the response vector $Y \in \mathbb{R}^n$ is generated according to the linear model
\[
Y_i = Z_i^\top \beta + \varepsilon_i, \quad \varepsilon_i \sim N (0, \sigma^2),
\]
independently for $i = 1, \dots, n$. We compare the following multiple testing-based variable selection procedures:
\begin{itemize}
    \item $M_0$: The original knockoff-filter method. The knockoff design matrix is constructed by using the semi-definite programming approach discussed in \citet{barber2015controlling}. We use the same knockoff design matrix for the following methods. For the knockoff method, we use the lasso penalty parameter for estimating the threshold.
    \item $M_1$: The Bon-BH method, as proposed by \citet{sarkar2022adjusting}.
    \item $M_2$: The adaptive Bon-BH method, as proposed by \citet{sarkar2022adjusting}.
    \item $M_3$: \textit{Method 1} of Section~3 with the calibrator $g_2$ mentioned in Section~4.
    \item $M_4$: \textit{Method 2} of Section~3 with the calibrator $g_2$ mentioned in Section~4.
    \item $M_5$: \textit{Method 3} of Section~3 with the calibrator $g_2$ mentioned in Section~4.
\end{itemize}

\begin{figure}[p]  
    \centering
    \includegraphics[width=\textwidth]{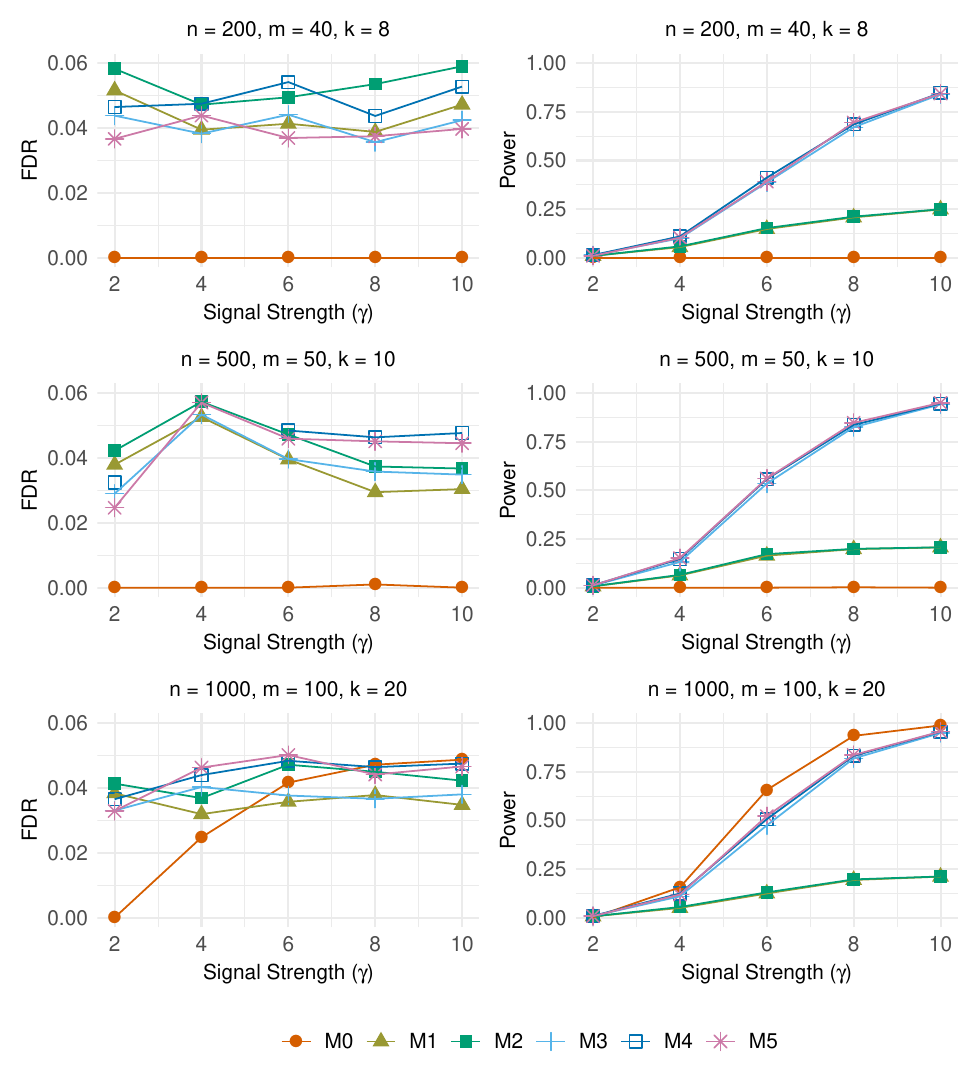}
    \caption{
        Comparison of false discovery rate and power across different settings $(n, m, k)$ with $\rho=0.5$ and $\alpha = 0.05$. Each row corresponds to a different simulation setting, and signal strength values ($\gamma = 2, 4, 6, 8, 10$) are shown on the x-axis.
    }
    \label{fig:simul_0.05}
\end{figure}

\begin{figure}[p]  
    \centering
    \includegraphics[width=\textwidth]{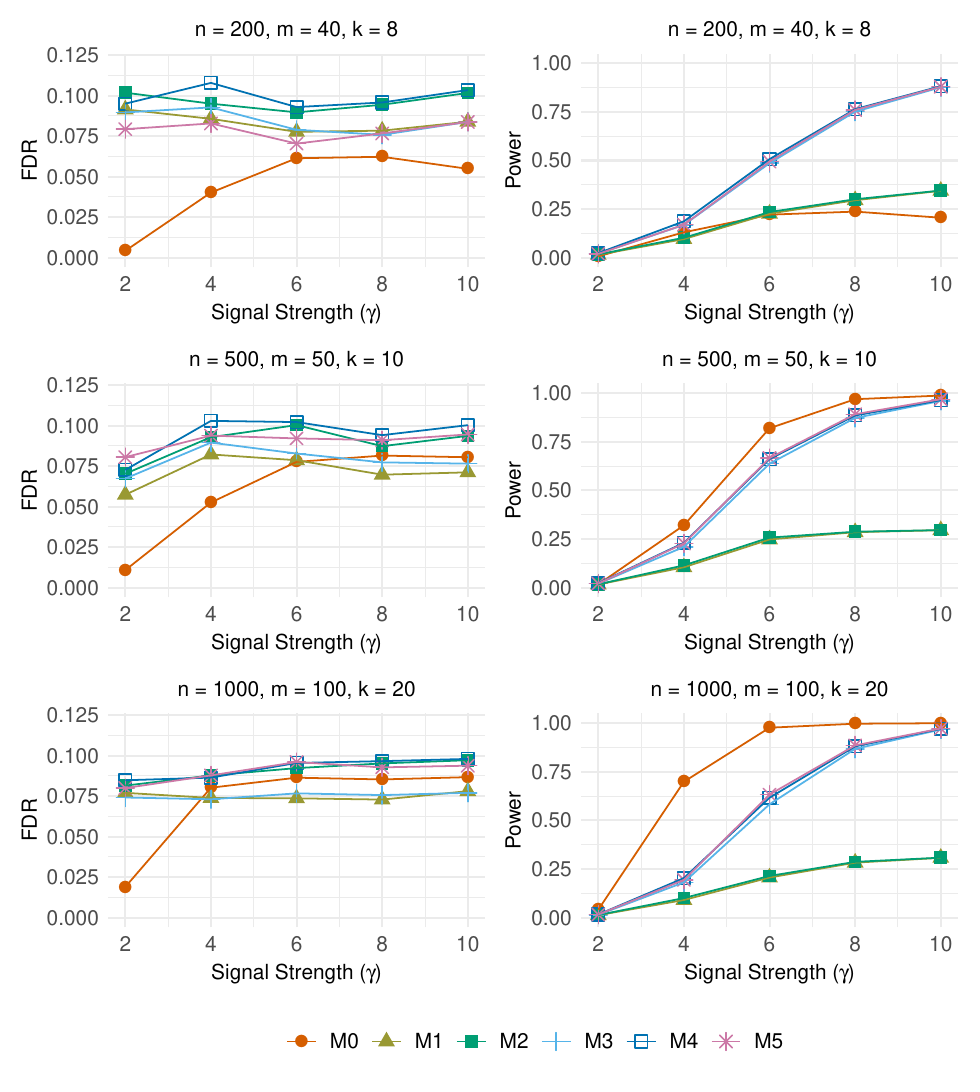}
    \caption{
        Comparison of false discovery rate and power across different settings $(n, m, k)$ with $\rho=0.5$ and $\alpha = 0.1$. Each row corresponds to a different simulation setting, and signal strength values ($\gamma = 2, 4, 6, 8, 10$) are shown on the x-axis.
    }
    \label{fig:simul_0.1}
\end{figure}

We implement the above methods with the level of FDR prefixed at $\alpha$. To assess performance, we compute the following quantities for each method:
\begin{itemize}
    \item Empirical false discovery rate: The average proportion of false positives among all discoveries, across all simulations.
    \item Empirical power: The average proportion of true positives among all truly non-null features, across all simulations.
\end{itemize}
Each method is evaluated over $N = 500$ repetitions of the simulation, and the mean empirical FDR and power are reported.

We set $(n,m,k) \in \{(200,40,8),\ (500,50,10),\ (1000,100,20)\}$ with $\rho = 0.5$, and $\lambda = 0.5$ for computing $\hat{\pi}_0$, and considered $\gamma \in \{2,4,6,8,10\}$ and $\alpha \in \{0.05, 0.1\}$. Simulation results for these settings are presented in Figures~\ref{fig:simul_0.05} and \ref{fig:simul_0.1}. We also experimented with several other configurations to assess the robustness of our findings. In particular, we additionally considered $(n,m,k) \in \{(200,40,20),\ (500,50,25),\ (1000,100,50)\}$ with $\rho \in \{0.1, 0.9\}$, corresponding to different proportions of true null hypotheses ($0.8$, and $0.5$) and varying levels of multicollinearity. The simulation results for these additional settings are presented in the \textit{Appendix B}. Across all scenarios considered, the qualitative conclusions remain consistent.

It is evident from Figures~\ref{fig:simul_0.05} and \ref{fig:simul_0.1} that all the methods maintain satisfactory control of the FDR at their respective nominal levels, $\alpha = 0.05$ and $\alpha = 0.1$, across varying simulation configurations. The proposed method $M_3$ consistently ensures strict FDR control, as supported by theoretical guarantees, while $M_4$ and $M_5$ also maintain strong empirical FDR control across both levels of $\alpha$. When the signal strength is low (e.g., $a = 2$), the knockoff method $M_0$ remains highly conservative in both figures, yielding negligible FDR and very low power, a trend that becomes less pronounced as the signal strengthens. Additional simulations show that in non-sparse settings with a moderate proportion of true nulls (here \(0.5\)) and low multicollinearity (\(\rho = 0.1\)), \(M_0\) is not conservative and, in fact, achieves the highest power. The methods $M_1$ and $M_2$ show slightly higher FDR than $M_0$, with improved power, yet they still fall short compared to the proposed methods. Notably, when $\alpha = 0.1$, all methods benefit from increased flexibility, leading to visibly higher power, but the relative performance patterns remain consistent. The power of $M_3$, $M_4$, and $M_5$ improves substantially, especially for moderate to high signal strengths, while still preserving acceptable FDR control. $M_4$ and $M_5$ continue to exhibit nearly indistinguishable power curves, with $M_5$ occasionally exhibiting a marginal edge. $M_3$ remains slightly conservative in power, particularly in weaker signal scenarios, yet demonstrates robust overall performance. Interestingly, in the strong signal regime (e.g., $a \geq 6$), particularly for $n = 1000$, $m = 100$, and $k = 20$, the knockoff method $M_0$ achieves surprisingly high power under both $\alpha$ levels, underscoring its potential in large-sample, high-signal contexts. Even so, the proposed methods $M_3$, $M_4$, and $M_5$ remain competitive and preferable due to their consistent balance between FDR control and statistical power. These trends are stable across both figures, indicating that the relative strengths of the proposed methods over the alternatives are not merely specific to a single FDR level. It is also worth noting that many practical applications involve low signal strengths and a small number of truly important hypotheses - settings where $M_3$, $M_4$, and $M_5$ clearly outperform the alternatives. Based on the overall evaluation, $M_3$, $M_4$, and $M_5$ emerge as the most reliable methods for robust FDR control with strong power across diverse scenarios. It is also observed that the overall performance of all methods deteriorates as multicollinearity increases, although their relative performance remains similar. We recommend using the proposed methods instead of the original knockoff procedure when the problem is sparse, with few and weak signals, which is common in practice; otherwise, the original knockoff method remains a better choice.

\section{Data analysis}

We analyze the HIV-1 drug resistance dataset to evaluate the performance of our proposed variable selection methods in a real-world genomics setting. The dataset, originally compiled and analyzed by \citet{rhee2005hiv, rhee2006genotypic}, includes measurements of drug resistance and genotype information for HIV-1 samples. Our analysis focuses on protease inhibitor (PI) drugs, which are a class of antiretroviral treatments designed to inhibit the HIV-1 protease enzyme and are widely used in clinical management of HIV.

\begin{figure}[h!]  
    \centering
    \includegraphics[width=\textwidth]{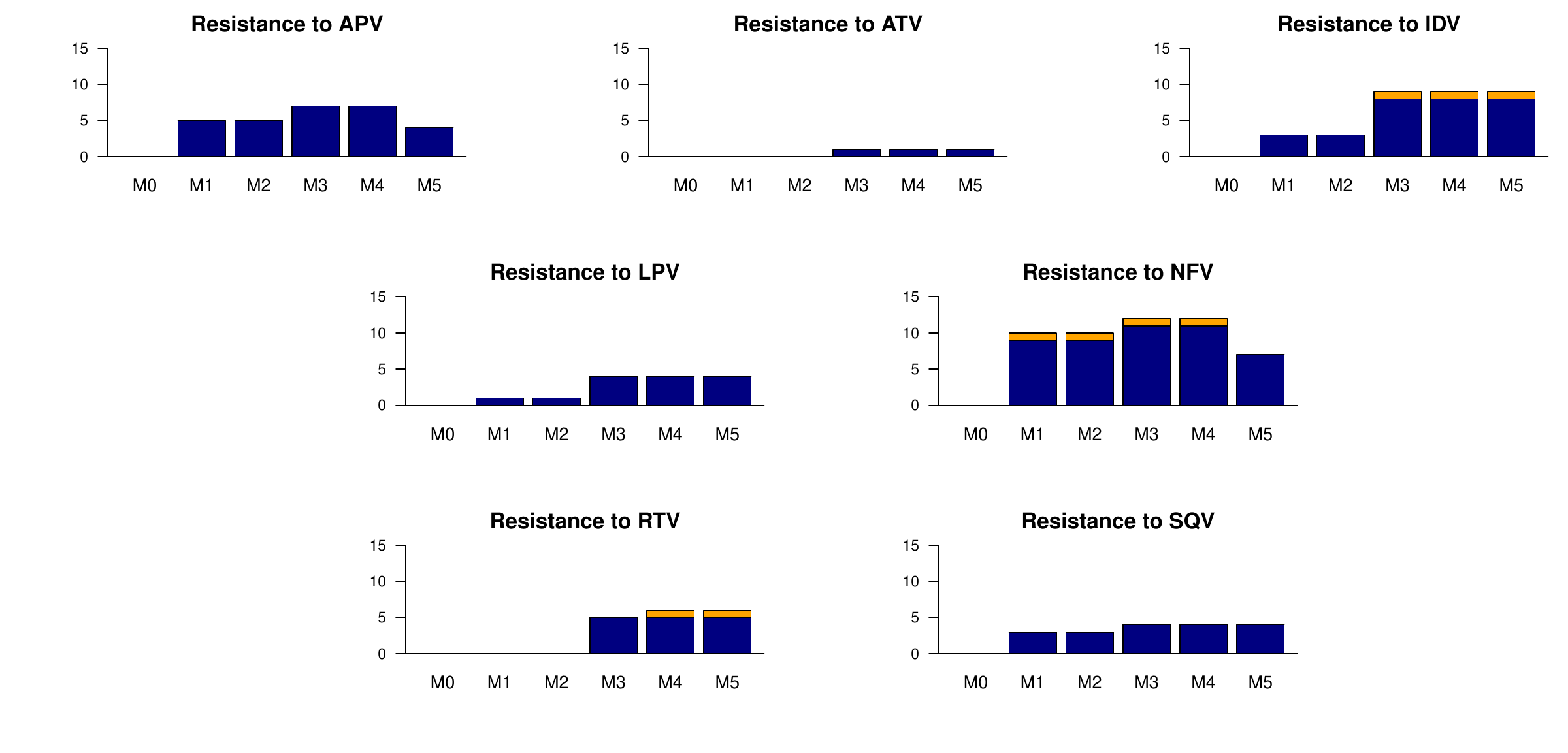}
    \caption{Variable selection results at false discovery rate level $\alpha = 0.05$ for each drug. Blue bars indicate positions with prior biological support; orange bars represent novel discoveries.}
    \label{fig:pi_0.05}
\end{figure}

\begin{figure}[h!]  
    \centering
    \includegraphics[width=\textwidth]{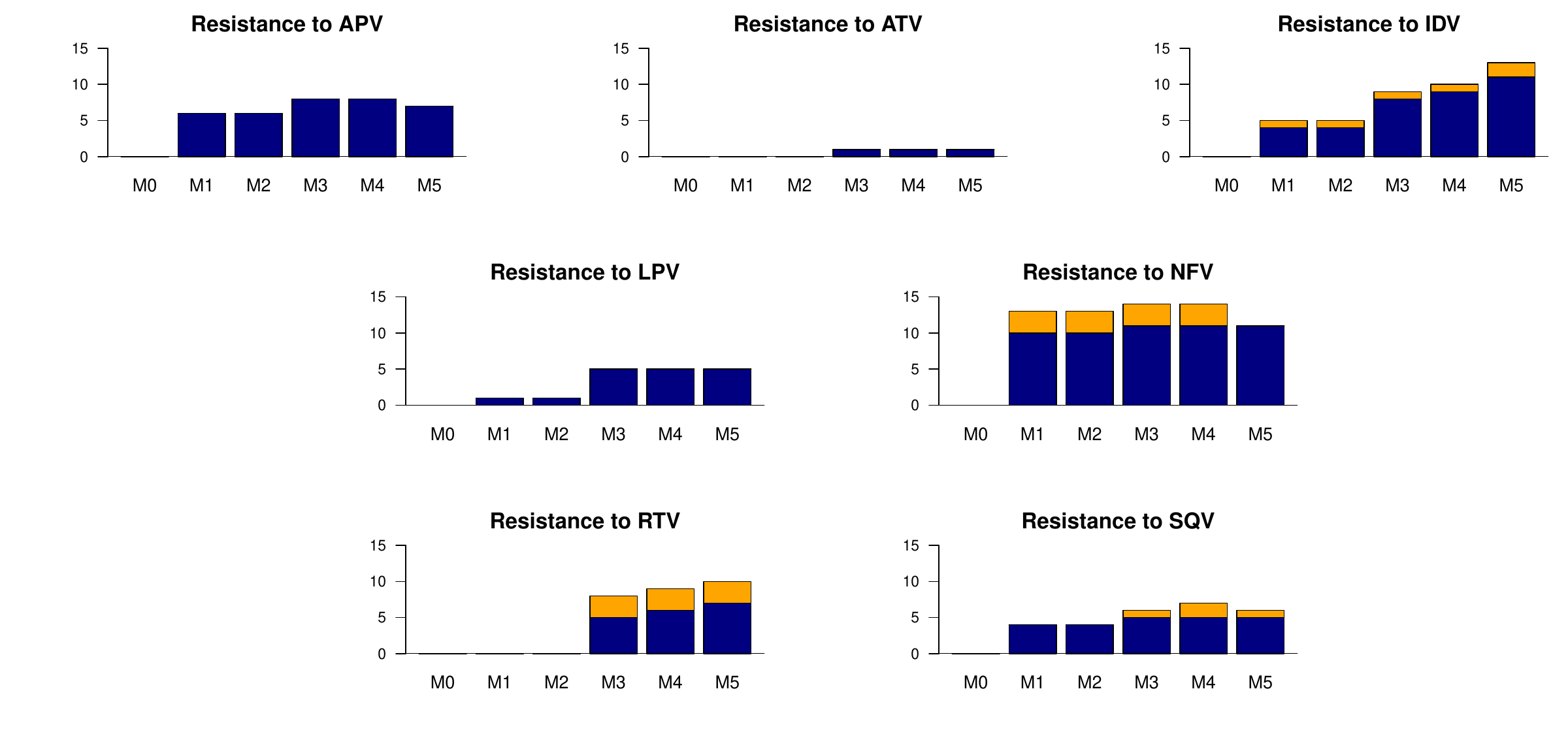}
    \caption{Variable selection results at false discovery rate level $\alpha = 0.1$ for each drug. Blue bars indicate positions with prior biological support; orange bars represent novel discoveries.}
    \label{fig:pi_0.1}
\end{figure}

The PI dataset contains resistance profiles for 7 different drugs across 846 samples, with genetic data available at 99 protease or reverse transcriptase (RT) positions. Among these, 209 distinct mutations were observed at least 3 times across the sample population. For each PI drug, we treat the response variable as the log-fold increase in lab-tested drug resistance for a given sample, and the covariates as binary indicators representing the presence or absence of specific mutations. Following \citet{barber2015controlling}, we preprocess the data by selecting only mutations that appear in at least 3 samples for each drug and removing duplicate columns in the design matrix to ensure identifiability.

Each drug is analyzed separately under an additive linear model with no interaction terms. We apply the knockoff filter ($M_0$) and the BH-type procedures (denoted as methods $M_1$ through $M_5$), each controlling the FDR at level $\alpha = 0.05$ and $\alpha = 0.1$. Here, $M_0$ refers to the standard knockoff filter, $M_1$ and $M_2$ are the methods proposed by \citet{sarkar2022adjusting}, while $M_3$, $M_4$, and $M_5$ are our proposed bounded $e$-value weighted procedures.

To validate the selections, we compare the identified mutation positions to the treatment-selected mutation (TSM) panel established by \citet{rhee2005hiv}. The TSM panel consists of mutation positions found to occur significantly more often in patients treated with PI-class drugs compared to untreated individuals, and serves as a proxy for ground truth in the absence of true labels. 

Figures~\ref{fig:pi_0.05} and~\ref{fig:pi_0.1} summarize the selection results at $\alpha = 0.05$ and $\alpha = 0.1$, respectively. Each panel corresponds to a different PI drug, and the bars indicate the number of mutation positions selected by each method. Blue bars represent positions that are supported by prior biological evidence (i.e., present in the TSM panel), while orange bars correspond to novel discoveries suggested by the method. Our results demonstrate that the proposed methods $M_3$, $M_4$, and $M_5$ outperform $M_0$, $M_1$ and $M_2$ across all drugs at $\alpha = 0.05, 0.1$. Note that the knockoff method does not detect any significant positions in this case, likely due to the small value of $\alpha$. In scenarios with a low prefixed FDR level, the knockoff-adjusted methods-particularly the proposed ones-demonstrate a clear advantage.

\section{Remarks}

Our primary contribution lies in proposing a new class of knockoff-assisted $p$-value-based multiple testing procedures that incorporate $e$-values as weights. This is a novel mechanism for improving power while maintaining control of the FDR. The proposed methods are also applicable in the regime $m < n \leq 2m$, where direct construction of knockoffs requires careful adaptation. Following the construction strategy of \citet{barber2015controlling} and implementation ideas demonstrated by \citet{sarkar2022adjusting}, we note that valid knockoff variables can still be created using row augmentation techniques, which ensures the applicability of our procedures.

Several works have studied optimal weighting strategies for weighted and adaptive weighted BH procedures, including \citet{roeder2009genome}, \citet{Habiger2017adaptive}, and more recently \citet{zheng2024fdr}. While these approaches provide principled ways to construct data-dependent weights under additional modeling assumptions, they do not directly yield a weighting or calibration function that aligns with our current framework. In particular, their optimal weights typically require auxiliary information or prior structure. In our problem setup, $\hat{\beta}^{(1)}$ can serve as the auxilliary information. Exploring such connections is an interesting direction for future research. 

A promising direction for future work is the extension of our methods to generalized linear models (GLMs), where model complexity and variance heterogeneity introduce new challenges. The recent work of \citet{rilling2025new} provides a framework that accommodates such complexities by constructing paired $p$-values, making it a natural foundation for extending our $e$-value-based methodology beyond linear models. While we have suggested a bounded calibrator in this work, it would be of interest to explore alternative calibrators within this framework and investigate the possibility of identifying an optimal choice.  

Finally, while our focus has been on controlling the standard FDR, our general framework naturally invites further exploration into alternative Type-I error metrics, such as the $k$-false discovery rate. As highlighted in \citet{sarkar2022adjusting}, such extensions are important and feasible, suggesting that the methods proposed here possess both robustness and adaptability to broader multiple testing settings.

\subsection*{Declaration} 
The authors declare no conflict of interest.

The dataset used in this study is openly available. All computations and visualizations were performed using the \textsf{R} programming language (\url{https://www.R-project.org/}). The codes can be made available in accordance with journal policies upon request.

\bibliography{Reference}

\newpage

\section*{Appendix A}

First, we state and prove a key lemma that is instrumental in establishing the asymptotic control of the false discovery rate (FDR) for \textit{Method~2} and \textit{Method~3} described in the main manuscript.

\begin{lemma}
\label{lem:asymptotic_pval}
Let $Y=X\beta+\varepsilon$, where $Y\in \mathbb{R}^n$ is the response vector, $X \in \mathbb{R}^{n \times m}$ is a fixed full-rank design matrix, $\beta \in \mathbb{R}^m$ is the vector of unknown coefficients, and $\varepsilon \sim N_n(0, \sigma^2 I_n)$ with unknown $\sigma^2 > 0$. Let $\hat{\beta} = (X^\top X)^{-1}X^\top Y$ be the least squares estimator, and define the residual variance estimator as
\[
\hat{\sigma}^2 = \frac{1}{\nu} \|Y - X\hat{\beta}\|^2, \quad \text{where } \nu = n - m.
\]
Consider testing the null hypothesis $H_0: \beta_j = 0$ against the alternative $H_1: \beta_j \neq 0$, $j\in \{1,2,...,m\}$, using the test statistic
\[
T_\nu = \frac{\hat{\beta}_j}{\hat{\sigma} \sqrt{(X^\top X)^{-1}_{jj}}}.
\]
Let the corresponding two-sided \textit{p}-value be
\[
p_\nu = 2\left\{1 - F_\nu\left(|T_\nu|\right)\right\},
\]
where $F_\nu$ is the CDF of the $t$-distribution with $\nu$ degrees of freedom. Then, under $H_0$, as $\nu \to \infty$,
\[
\Pr(p_\nu \leq \alpha \mid \hat{\sigma}) = \alpha + o(1),
\]
for any $\alpha \in [0, 1]$. That is, the \textit{p}-value remains asymptotically valid even when conditioned on $\hat{\sigma}$.
\end{lemma}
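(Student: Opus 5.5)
We would prove the lemma by conditioning on $\hat{\sigma}$ and using the independence of $\hat{\beta}$ and the residual vector. Under the Gaussian model, $\hat{\beta}$ is a linear function of $X^\top Y$ and $Y - X\hat{\beta}$ is the projection onto the orthogonal complement of the column space of $X$. Both are jointly Gaussian and uncorrelated, so $\hat{\beta}$ is independent of $\hat{\sigma}^2$. Under $H_0$ we have $Z := \hat{\beta}_j/(\sigma\sqrt{(X^\top X)^{-1}_{jj}}) \sim N(0,1)$, independent of $\hat{\sigma}$. Hence $T_\nu = Z\,\sigma/\hat{\sigma}$, and conditionally on $\hat{\sigma}=s$, $T_\nu \sim N(0,\sigma^2/s^2)$. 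It follows that
\[
\Pr(p_\nu \leq \alpha \mid \hat{\sigma}) = \Pr\bigl(|Z| \geq (\hat{\sigma}/\sigma)\, c_\nu(\alpha) \mid \hat{\sigma}\bigr) = 2\bigl\{1-\Phi\bigl((\hat{\sigma}/\sigma)\, c_\nu(\alpha)\bigr)\bigr\},
\]
where $c_\nu(\alpha) = F_\nu^{-1}(1-\alpha/2)$ and $\Phi$ is the standard normal CDF. The conditional probability is therefore an explicit continuous function of the ratio $\hat{\sigma}/\sigma$ and the $t$-quantile.

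Two limits are needed. First, $c_\nu(\alpha) \to z_{1-\alpha/2} = \Phi^{-1}(1-\alpha/2)$ as $\nu\to\infty$. This holds because $F_\nu \to \Phi$ uniformly and $\Phi$ is strictly increasing, so quantiles converge. Second, $\hat{\sigma}^2/\sigma^2 \sim \chi^2_\nu/\nu \to 1$ almost surely and in probability, by the law of large numbers or simply because its variance is $2/\nu$. Continuity of $\Phi$ then gives
\[
2\{1-\Phi((\hat{\sigma}/\sigma)c_\nu(\alpha))\} \to 2\{1-\Phi(z_{1-\alpha/2})\} = \alpha .
\]
The endpoint cases $\alpha=0$ and $\alpha=1$ are trivial: $c_\nu(0)=\infty$ and $c_\nu(1)=0$ give the exact values $0$ and $1$.

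The main obstacle is interpreting ``$o(1)$'' for a random conditional probability: it cannot be a deterministic $o(1)$ for every realization of $\hat{\sigma}$. We would state the result as convergence in probability (indeed almost surely, along a sequence of independent $\chi^2$ sums). We would then note the consequence used later, via bounded convergence: since the conditional probability lies in $[0,1]$, its expectation and any bounded functional built from it converge as well. If a quantitative rate is wanted, a first-order expansion gives a bound of the form $|\hat{\sigma}/\sigma - 1|\, c_\nu\, \phi(\cdot)\cdot 2 + O(1/\nu)$, which is $O_P(\nu^{-1/2})$ uniformly in $\alpha$. Uniformity holds because $x\phi(x)$ is bounded. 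Uniformity in $\alpha$ is what the FDR arguments for \textit{Method~2} and \textit{Method~3} will need, since they plug in data-dependent thresholds. We would therefore record the bound $\sup_{\alpha}|\Pr(p_\nu\le\alpha\mid\hat{\sigma})-\alpha| = O_P(\nu^{-1/2})$ explicitly.
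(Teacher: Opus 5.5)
Your proof is correct, and it takes a somewhat different and more rigorous route than the paper's.

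\textbf{The paper's route.} The paper argues unconditionally. It writes $T_\nu = Z\sigma/\hat{\sigma} = Z + o_P(1)$ and uses $\sup_t|F_\nu(t)-\Phi(t)|\to 0$ to get $p_\nu = 2\{1-\Phi(|Z|)\} + o_P(1)$. It then simply asserts that the conditional probability given $\hat{\sigma}$ is $\alpha+o(1)$. The independence of $Z$ and $\hat{\sigma}$ is used only implicitly, and no mode of convergence is specified for this random conditional probability.

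\textbf{Your route.} You condition first and use the independence of $\hat{\beta}$ and the residuals to get the exact formula $\Pr(p_\nu\le\alpha\mid\hat{\sigma}) = 2\{1-\Phi((\hat{\sigma}/\sigma)c_\nu(\alpha))\}$. You then pass to the limit via $c_\nu(\alpha)\to z_{1-\alpha/2}$ and $\hat{\sigma}/\sigma\to 1$.

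\textbf{What this buys.} The ingredients are the same as the paper's, but your ordering gains three things.
\begin{enumerate}
\item It makes the final step rigorous.
\item It makes explicit that $o(1)$ must be read as $o_P(1)$, or almost surely under a coupling. This reading is necessary: $\hat{\sigma}$ has full support on $(0,\infty)$, so the conditional probability ranges over all of $(0,1)$ for every $\nu$, and the lemma read with a deterministic $o(1)$ is false.
\item It gives the uniform rate $\sup_\alpha|\Pr(p_\nu\le\alpha\mid\hat{\sigma})-\alpha| = O_P(\nu^{-1/2})$. This is the form actually needed when the lemma is invoked with data-dependent thresholds for Methods~2 and~3.
\end{enumerate}

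\textbf{A small correction to the rate argument.} Set $r=\hat{\sigma}/\sigma$. The mean-value bound is $|\Phi(c)-\Phi(rc)|\le|r-1|\,c\,\phi(\xi)$, with $\xi$ between $c$ and $rc$. This only gives $c\,\phi(\xi)\le \max(1,r^{-1})\sup_x x\phi(x)$, so you pick up an extra factor $\max(1,\sigma/\hat{\sigma})$. That factor is $O_P(1)$ and leaves your conclusion unchanged.
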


\begin{proof}[Proof of Lemma~\ref{lem:asymptotic_pval}]

Under $H_0: \beta_j = 0$, we have:
\[
\hat{\beta}_j \sim N\left(0, \sigma^2 (X^\top X)^{-1}_{jj}\right).
\]
Also,
\[
\frac{\nu \hat{\sigma}^2}{\sigma^2} \sim \chi^2_\nu,
\]
and the numerator and denominator of $T_\nu$ are independent. Hence, under $H_0$
\[
T_\nu \sim t_\nu \quad \Rightarrow \quad \mathbb{P}(p_\nu \leq \alpha) = \alpha.
\]
As $\nu \to \infty$, 
\(
\hat{\sigma}^2 \xrightarrow{P} \sigma^2 \quad \Rightarrow \hat{\sigma} = \sigma + o_P(1)
\).
Thus, under $H_0$, \(
T_\nu = Z \cdot \frac{\sigma}{\hat{\sigma}} = Z + o_P(1).
\), where
\[
Z = \frac{\hat{\beta}_j}{\sigma \sqrt{(X^\top X)^{-1}_{jj}}} \sim N(0,1).
\]
Since $F_\nu(t) = \Phi(t) + r_\nu(t)$ with $r_\nu(t) = o(1)$ uniformly in $t$,
\[
p_\nu = 2(1 - F_\nu(|T_\nu|)) = 2(1 - \Phi(|T_\nu|)) + o(1).
\]
As $|T_\nu| = |Z| + o_P(1)$,
\[
\Phi(|T_\nu|) = \Phi(|Z|) + o_P(1) \quad \Rightarrow \quad p_\nu = 2(1 - \Phi(|Z|)) + o_P(1).
\]
Since $\hat{\sigma} \xrightarrow{P} \sigma$ and $Z \sim \mathcal{N}(0,1)$,
\[
\Pr(p_\nu \leq \alpha \mid \hat{\sigma}) = \Pr(2(1 - \Phi(|Z|)) + o_P(1) \leq \alpha \mid \hat{\sigma}) = \alpha + o(1).
\]
\end{proof}

Now, we prove the main result stated in the manuscript concerning the FDR control of the proposed methods. Specifically, we establish exact FDR control for \textit{Method~1}, and asymptotic FDR control for \textit{Method~2} and \textit{Method~3}.

\begin{proof}[Proof of Theorem~\ref{thm:combined_theorem}] We present the proofs of the three parts separately.

FDR control of \textit{Method~1}: 
Let $\alpha_j=j\alpha/m$ for $j=1,2,...,m$. In \textit{Method~1}, $\tilde{R}$ denotes the number of rejected null hypotheses. For each $j\in \{1,2,...,m\}$, we consider $\tilde{P}_{(i)\setminus\{j\}}$, $i=1,2,...,m-1$ to be the ordered components of $(\tilde{P}_1, \tilde{P}_2, ..., \tilde{P}_m)\setminus\{\tilde{P}_j\}$ and $\tilde{P}_{(0)\setminus\{j\}}=0$. We define $\tilde{R}_{(-j)}$ as:
\[
\{\tilde{R}_{(-j)} = r - 1\} \equiv \{\tilde{P}_{(r-1)\setminus \{j\}} \leq \alpha_r, \tilde{P}_{(r)\setminus \{j\}} \geq \alpha_{r+1}, \ldots, \tilde{P}_{(m-1)\setminus \{j\}} \geq \alpha_m\},
\]
for $r = 1,2, \ldots, m$. Then the FDR of \textit{Method~1} is given by

\begin{align*}
\mathrm{FDR} &= \sum_{j \in \mathcal{N}} E \left\{ \frac{\mathbb{I}\left(\tilde{P}_j \leq \alpha_{\tilde{R}}\right)}{\tilde{R} \vee 1} \right\} \\
&= \sum_{j \in \mathcal{N}} E \left\{ \frac{\mathbb{I}\left(\tilde{P}_j \leq \alpha_{\tilde{R}_{(-j)}+1}\right)}{\tilde{R}_{(-j)} + 1} \right\} \\
&= \sum_{j \in \mathcal{N}} \sum_{r=1}^{m} \frac{1}{r} \Pr\left(\tilde{P}_j \leq \alpha_r,\, \tilde{R}_{(-j)} = r - 1\right) \\
&= \sum_{j \in \mathcal{N}} \sum_{r=1}^{m} \frac{1}{r} E \left\{ \Pr(\tilde{R}_{(-j)} = r - 1 \mid \tilde{P}_j) 
\cdot \mathbb{I}(\tilde{P}_j \leq \alpha_r) \right\} \\
&= \sum_{j \in \mathcal{N}} E_{S^{(1)}} E_{\hat{\sigma}} \left[ \sum_{r=1}^{m} \frac{1}{r} E_{P_j^{(2)}} \Big\{ 
\Pr\left(\tilde{R}_{(-j)} = r - 1 \mid P^{(2)}_j, S^{(1)}, \hat{\sigma}\right) \right. \\
&\qquad\qquad\qquad\qquad\qquad\qquad\quad \left. \cdot\, \mathbb{I}(P^{(2)}_j \leq S^{(1)}_j\alpha_r) \,\Big|\, S^{(1)}, \hat{\sigma} \Big\} \right].
\end{align*}

Now, we rewrite the summation over $r$ as:
\begin{align*}
& \sum_{r=1}^{m} \frac{1}{r} \, E_{P_j^{(2)}} \Bigg\{ 
\left[ 
\Pr\left(\tilde{R}_{(-j)} \geq r - 1 \mid P^{(2)}_j, S^{(1)}, \hat{\sigma} \right)
- 
\Pr\left(\tilde{R}_{(-j)} \geq r \mid P^{(2)}_j, S^{(1)}, \hat{\sigma} \right)
\right] \\
& \qquad \cdot \mathbb{I}\left(P^{(2)}_j \leq S^{(1)}_j\alpha_r,\, P^{(x)}_j \leq \lambda\right)
\, \Bigg| \, S^{(1)}, \hat{\sigma} \Bigg\}\\
& = \sum_{r=1}^{m} \Pr(\tilde{R}_{(-j)} \geq r - 1 \mid P^{(2)}_j, S^{(1)}, \hat{\sigma})\\
& \qquad \cdot E_{P_j^{(2)}}\left[ \frac{\mathbb{I}(P^{(2)}_j \leq S^{(1)}_j\alpha_r)}{r} - \frac{\mathbb{I}(P^{(2)}_j \leq S^{(1)}_j\alpha_{r-1})}{r - 1} \Bigg| S^{(1)},\hat{\sigma}\right].
\end{align*}

Conditional on $\hat{\sigma}$, the distribution of $P_j^{(2)}$ is independent of $S^{(1)}$. Let $F$ be the common cumulative distribution function (cdf) of $P_j^{(2)}$ for each $j\in \mathcal{N}$. Thus, 

\begin{align*}
\mathrm{FDR} = \sum_{j \in \mathcal{N}} E_{S^{(1)}} \Bigg[ E_{\hat{\sigma}} \Bigg\{ 
& \sum_{r=1}^{m} \Pr\left(\tilde{R}_{(-j)} \geq r - 1 \mid P^{(2)}_j, S^{(1)}, \hat{\sigma}\right) \\
& \cdot \left( \frac{F(S^{(1)}_j\alpha_r\mid \hat{\sigma})}{r} - \frac{F(S^{(1)}_j\alpha_{r-1}\mid \hat{\sigma})}{r - 1} \right)
\Bigg\} \Bigg]
\end{align*}

For $r=1$, the term within the brackets after the operator $E_{\hat{\sigma}}$ equals
\[
E_{\hat{\sigma}}\left\{F(S_j^{(1)}\alpha_1\mid \hat{\sigma}))\right\},
\]
and for $r=2,3,...,m$, it can be written as 
\[
S^{(1)}_j\alpha_r E_{\hat{\sigma}^*}\left\{\psi(S^{(1)},\hat{\sigma}^*)\left(\frac{1}{r}-\frac{1}{r-1}\frac{F(S^{(1)}_j\alpha_{r-1}\mid \hat{\sigma}^*)}{F(S^{(1)}_j\alpha_r\mid \hat{\sigma}^*)}\right)\right\}.
\]
Here, density of $\hat{\sigma}^*\mid S^{(1)}$ at $z$ is
\[
h^*(z)=\frac{h(z)F(S^{(1)}_j\alpha_r\mid \hat{\sigma}=z)}{S^{(1)}_j\alpha_r},
\]
where $h$ is the density of $\hat{\sigma}$. 

In Section~S.2.4 of the supplementary material of \citet{sarkar2022adjusting}, it is justified that the density of $P_j^{(2)}\mid \hat{\sigma}=z$ at $u$ is totally positive of order 2 (TP$_2$) in $(u,z)$. For the definition of this special type of dependence and some associated discussions under the setup considered in this work, we refer to Section~S.2.1 of the supplementary material of \citet{sarkar2022adjusting}. Thus, 
\[
\left(\frac{1}{r}-\frac{1}{r-1}\frac{F(S^{(1)}_j\alpha_{r-1}\mid \hat{\sigma}^*=z)}{F^*(S^{(1)}_j\alpha_r\mid \hat{\sigma}^*=z)}\right)
\]
is increasing in $z$ and $\psi(S^{(1)},\hat{\sigma}^*=z)$ is decreasing in $z$. This is a consequence of the TP$_2$ dependence as illustrated in the proof of Theorem~S1~(ii), Section~S.2.2 of \citet{sarkar2022adjusting}. 

Note that, $F$ is the cdf of the uniform distribution over $[0,1]$ and hence
\[
E_{\hat{\sigma}^*}\left(\frac{1}{r}-\frac{1}{r-1}\frac{F(S^{(1)}_j\alpha_{r-1}\mid \hat{\sigma}^*)}{F^*(S^{(1)}_j\alpha_r\mid \hat{\sigma}^*)}\right)=\frac{1}{S^{(1)}_j\alpha_r}E_{\hat{\sigma}}\left( \frac{F(S^{(1)}_j\alpha_r\mid \hat{\sigma})}{r} - \frac{F(S^{(1)}_j\alpha_{r-1}\mid \hat{\sigma})}{r - 1} \right)=0.
\]

Using the fact that the covariance of two oppositely monotonic functions is negative, we have
\[
E_{\hat{\sigma}^*}\left\{\psi(S^{(1)},\hat{\sigma}^*)\left(\frac{1}{r}-\frac{1}{r-1}\frac{F(S^{(1)}_j\alpha_{r-1}\mid \hat{\sigma}^*)}{F(S^{(1)}_j\alpha_r\mid \hat{\sigma}^*)}\right)\right\}\leq 0.
\]

Hence, 
\[
\mathrm{FDR}\leq \sum_{j\in\mathcal{N}} E_{S^{(1)}}[E_{\hat{\sigma}}\{F(S^{(1)}_j\alpha/m\mid\hat{\sigma})\}] = \sum_{j\in\mathcal{N}} E_{S^{(1)}} [S^{(1)}_j\alpha/m]=(m_0/m)\alpha\leq \alpha. 
\]

\medskip
FDR control of \textit{Method~2}: 
For finding the FDR, we condition on $\hat{\sigma}$ and $S^{(1)}$ as in the previous proof. Following the steps of the proof of Theorem~1 of \citet{ramdas2019unified} and using Lemma~\ref{lem:asymptotic_pval}, we get
\[
E\left\{ \frac{\text{V}}{\text{R}} \middle| S^{(1)},\hat{\sigma} \right\} \leq \frac{\alpha}{m} \sum_{j \in \mathcal{N}} S^{(1)}_j E \left\{ \frac{1}{\hat{\pi}^{-j}_{0}} \middle| S^{(1)},\hat{\sigma} \right\} + o(1),
\]
where,
\[
\hat{\pi}^{-j}_{0} = \frac{1 + \sum_{k \neq j} \mathbb{I}(P^{(2)}_k > \lambda)}{m(1 - \lambda)}.
\]
Following \citet{benjamini2006adaptive} and using Lemma~\ref{lem:asymptotic_pval}, it is easy to get that 
\[
E \left\{ \hat{\pi}^{-j}_{0} \middle| S^{(1)},\hat{\sigma} \right\} \leq \frac{1}{\pi_0} + o(1).
\]
Thus,
\[
E\left\{ \frac{\text{V}}{\text{R}} \middle| S^{(1)},\hat{\sigma} \right\} \leq \frac{\alpha}{m\pi_0} \sum_{j \in \mathcal{N}} S^{(1)}_j + o(1),
\]
Note that \(E(S_j^{(1)}) \leq 1\) for \(j \in \mathcal{N}\), and hence by noting that the expression does not involve $\hat{\sigma}$ and taking the expectation over $S^{(1)}$, we get  
\[
\mathrm{FDR} \leq \frac{m_0\alpha}{m\pi_0} + o(1)=\alpha + o(1).
\]

\medskip
FDR control of \textit{Method~3}:  
Following the proof of Lemma~2.2 of \citet{biswas2023new} and using Lemma~\ref{lem:asymptotic_pval}, we can show that for $j\in\mathcal{N}$,
\[
E \left\{ \hat{\delta}^{-j}_{0} \middle| W^{(1)},\hat{\sigma} \right\} \leq \frac{m}{\max\{W^{(1)}_1, W^{(1)}_2, ..., W^{(1)}_m\}+\sum_{k\in\mathcal{N}\setminus\{j\}}W^{(1)}_k}+ o(1),
\]
where
\[
\hat{\delta}^{-j}_{0} = \frac{ \max\{W^{(1)}_1, W^{(1)}_2, ..., W^{(1)}_m\} + \sum_{k \neq j} W^{(1)}_k \mathbb{I}(P^{(2)}_k > \lambda)}{m(1 - \lambda)}.
\]
It is to be mentioned here that Lemma~3 of \citet{ramdas2019unified} plays an important role in Lemma~2.2 of \citet{biswas2023new} and the above inequality can also be directly derived from Lemma~3 of \citet{ramdas2019unified}. Following the proof of Theorem~3.1 of \citet{biswas2023new} and using the above inequality, it can be shown that
\[
E\left\{ \frac{\text{V}}{\text{R}} \middle| W^{(1)},\hat{\sigma} \right\} \leq \alpha\sum_{j\in\mathcal{N}} \frac{W^{(1)}_j}{\max\{W^{(1)}_1, W^{(1)}_2, ..., W^{(1)}_m\} + \sum_{k\in\mathcal{N}\setminus\{j\}}W^{(1)}_k}+ o(1).
\]
It is easy to check that the multiplier of $\alpha$ in the above inequality is less than or equal to $1$. Hence, 
\[
\mathrm{FDR} \leq \alpha + o(1).
\]

\end{proof}

\newpage

\section*{Appendix B}

\begin{figure}[h!]  
    \centering
    \includegraphics[width=\textwidth]{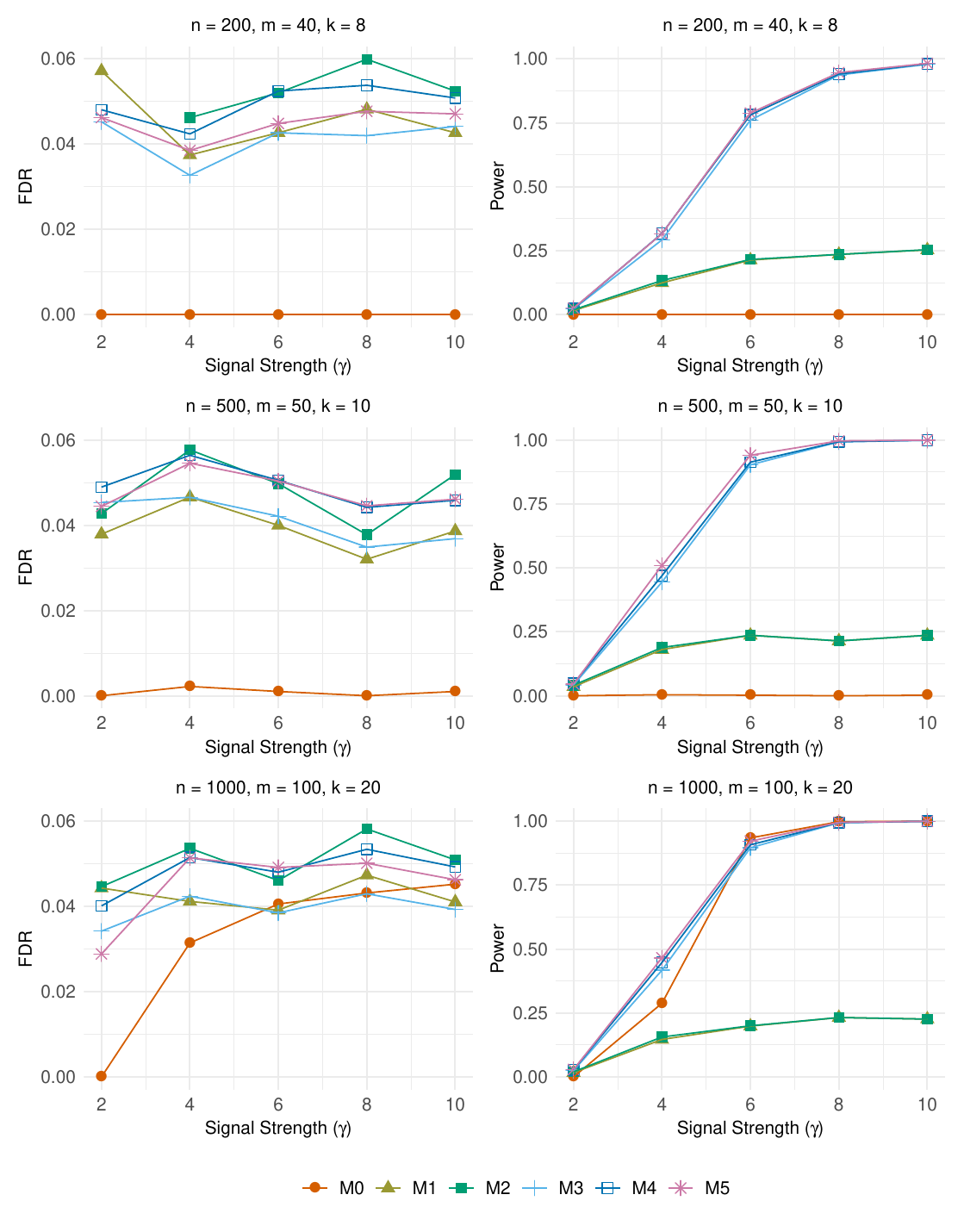}
    \caption{
        Comparison of false discovery rate and power across different settings $(n, m, k)$ with $\rho=0.1$ and $\alpha = 0.05$. Each row corresponds to a different simulation setting, and signal strength values ($\gamma = 2, 4, 6, 8, 10$) are shown on the x-axis.
    }
    
\end{figure}

\newpage

\begin{figure}[h!]  
    \centering
    \includegraphics[width=\textwidth]{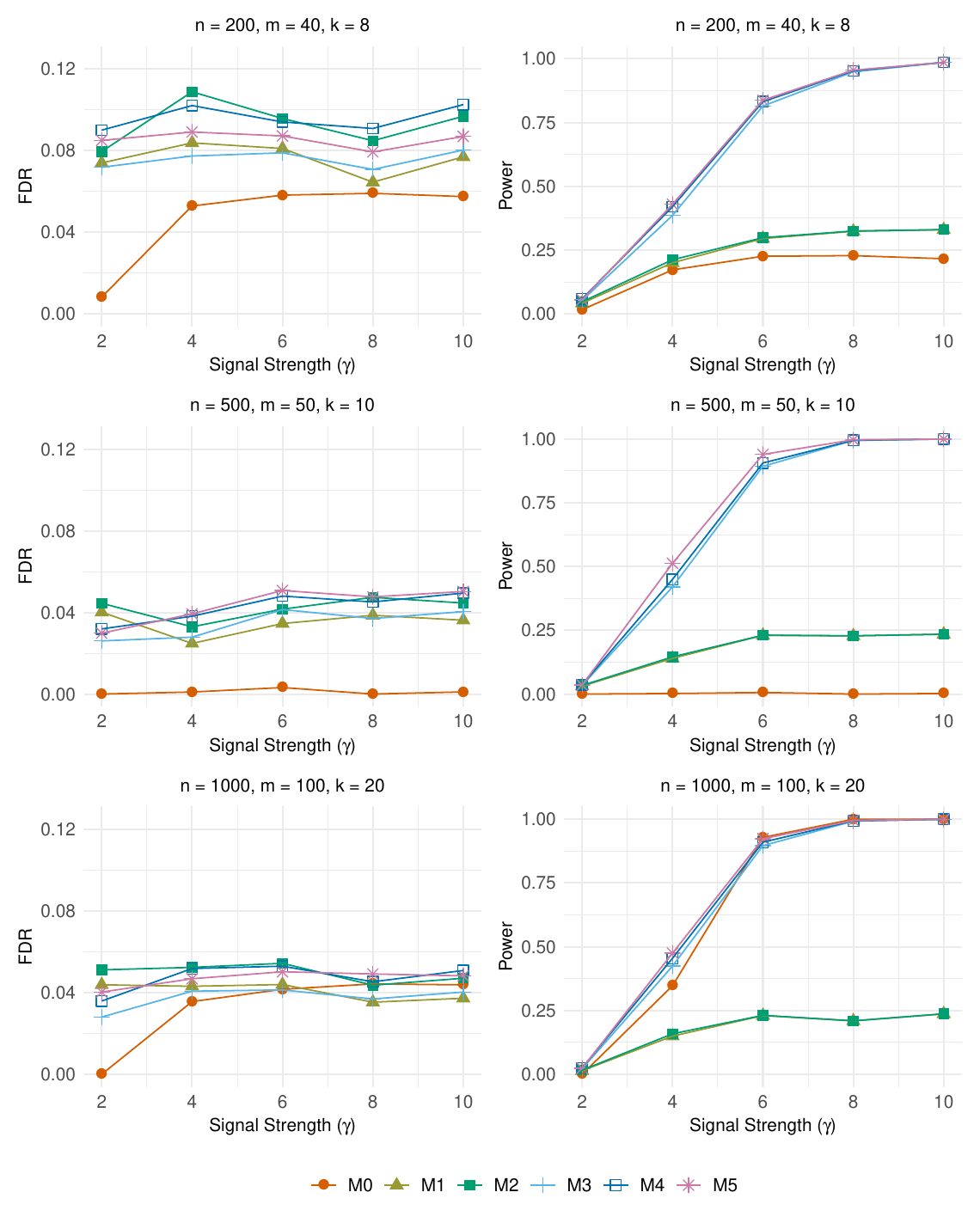}
    \caption{
        Comparison of false discovery rate and power across different settings $(n, m, k)$ with $\rho=0.1$ and $\alpha = 0.1$. Each row corresponds to a different simulation setting, and signal strength values ($\gamma = 2, 4, 6, 8, 10$) are shown on the x-axis.
    }
    
\end{figure}

\newpage

\begin{figure}[h!]  
    \centering
    \includegraphics[width=\textwidth]{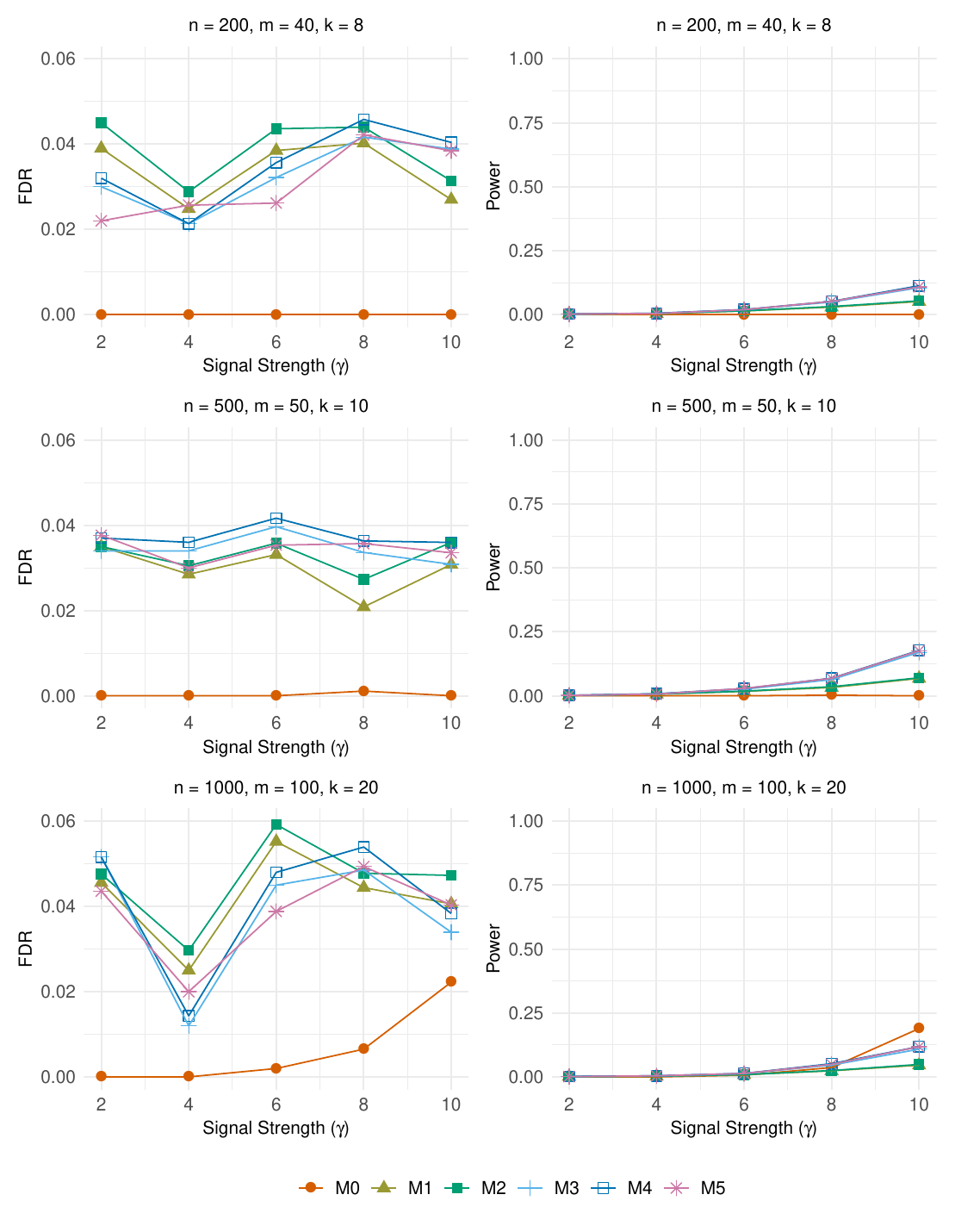}
    \caption{
        Comparison of false discovery rate and power across different settings $(n, m, k)$ with $\rho=0.9$ and $\alpha = 0.05$. Each row corresponds to a different simulation setting, and signal strength values ($\gamma = 2, 4, 6, 8, 10$) are shown on the x-axis.
    }
    
\end{figure}

\newpage

\begin{figure}[h!]  
    \centering
    \includegraphics[width=\textwidth]{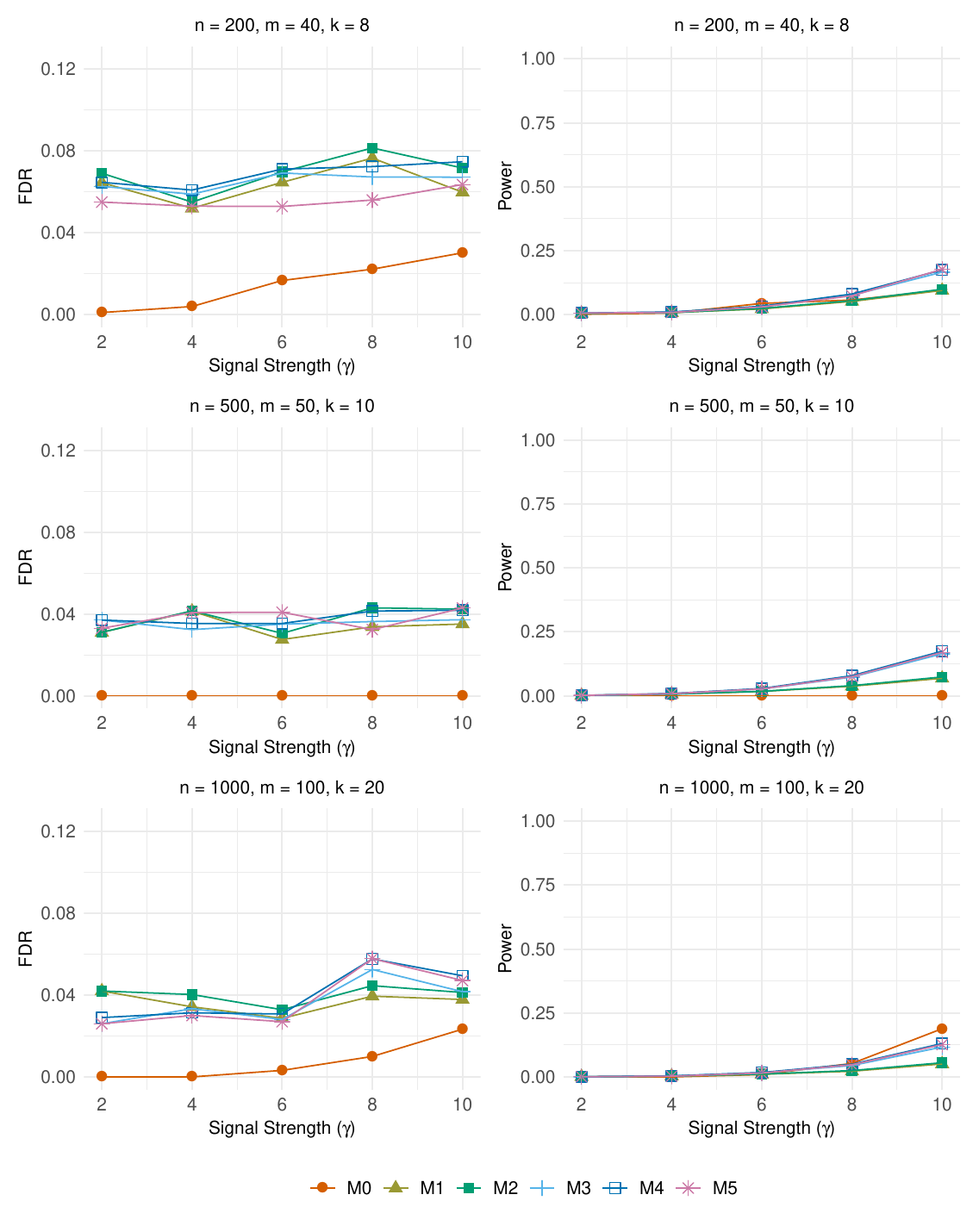}
    \caption{
        Comparison of false discovery rate and power across different settings $(n, m, k)$ with $\rho=0.9$ and $\alpha = 0.1$. Each row corresponds to a different simulation setting, and signal strength values ($\gamma = 2, 4, 6, 8, 10$) are shown on the x-axis.
    }
    
\end{figure}

\newpage

\begin{figure}[h!]  
    \centering
    \includegraphics[width=\textwidth]{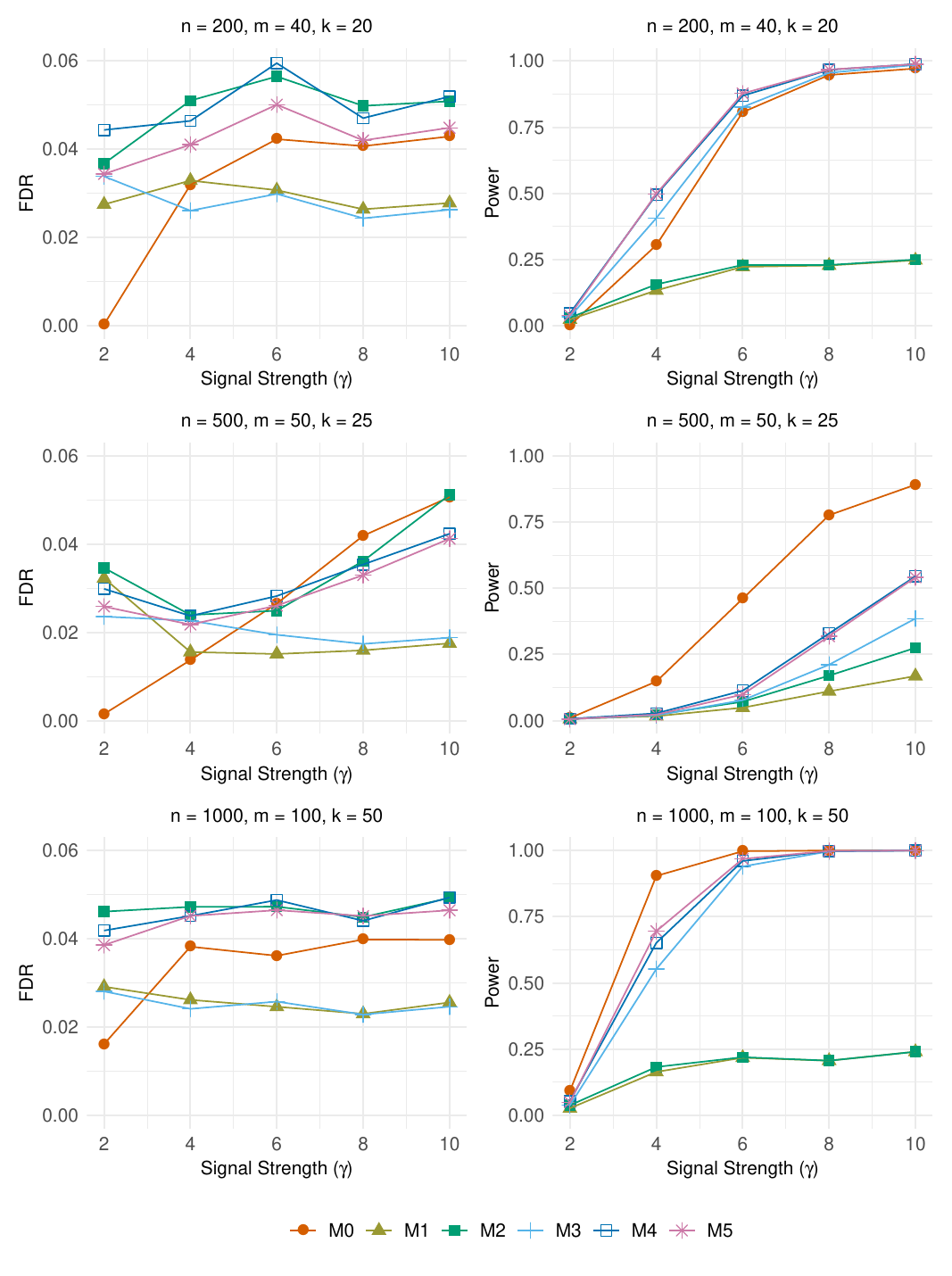}
    \caption{
        Comparison of false discovery rate and power across different settings $(n, m, k)$ with $\rho=0.1$ and $\alpha = 0.05$. Each row corresponds to a different simulation setting, and signal strength values ($\gamma = 2, 4, 6, 8, 10$) are shown on the x-axis.
    }
    
\end{figure}

\newpage

\begin{figure}[h!]  
    \centering
    \includegraphics[width=\textwidth]{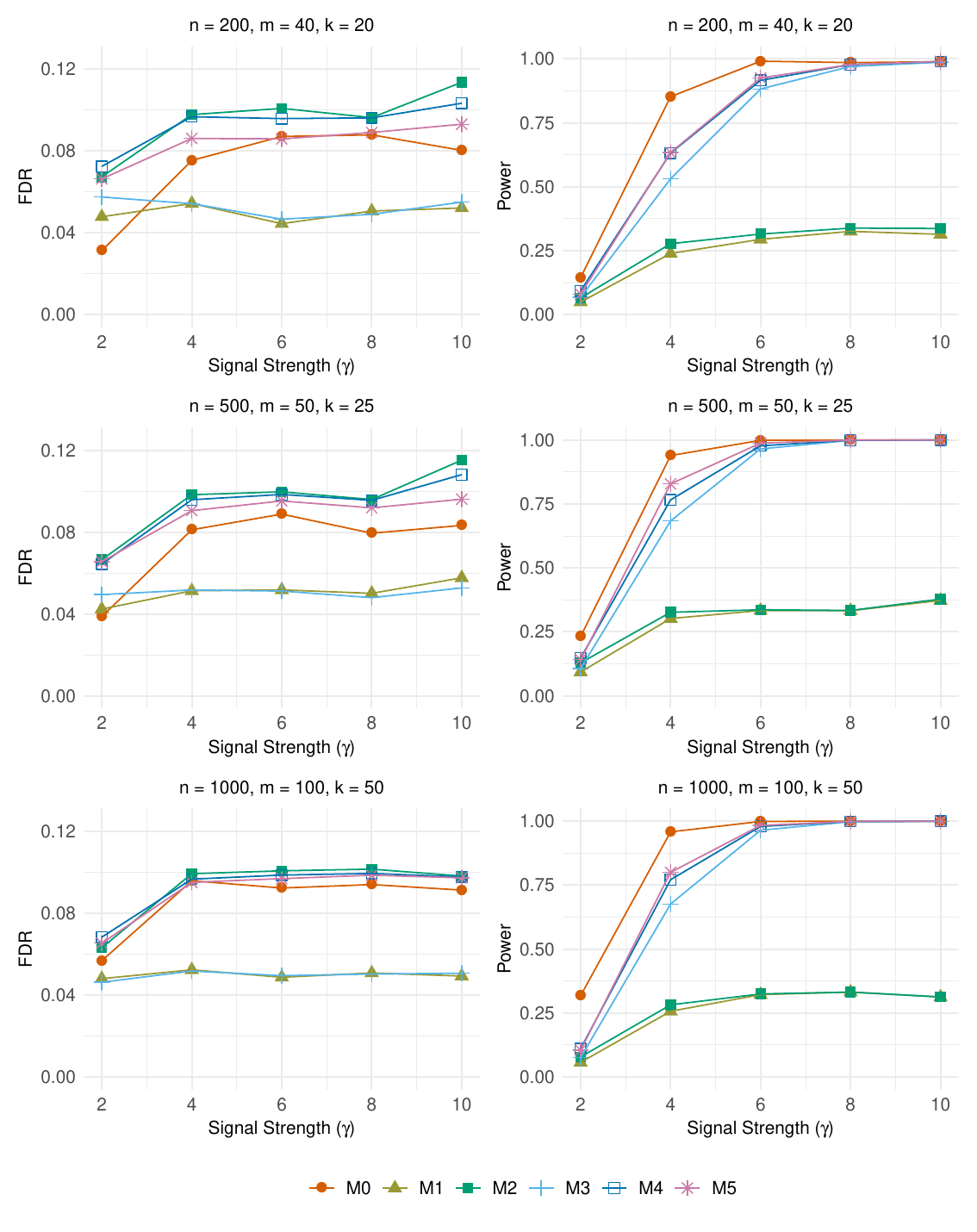}
    \caption{
        Comparison of false discovery rate and power across different settings $(n, m, k)$ with $\rho=0.1$ and $\alpha = 0.1$. Each row corresponds to a different simulation setting, and signal strength values ($\gamma = 2, 4, 6, 8, 10$) are shown on the x-axis.
    }
    
\end{figure}

\newpage

\begin{figure}[h!]  
    \centering
    \includegraphics[width=\textwidth]{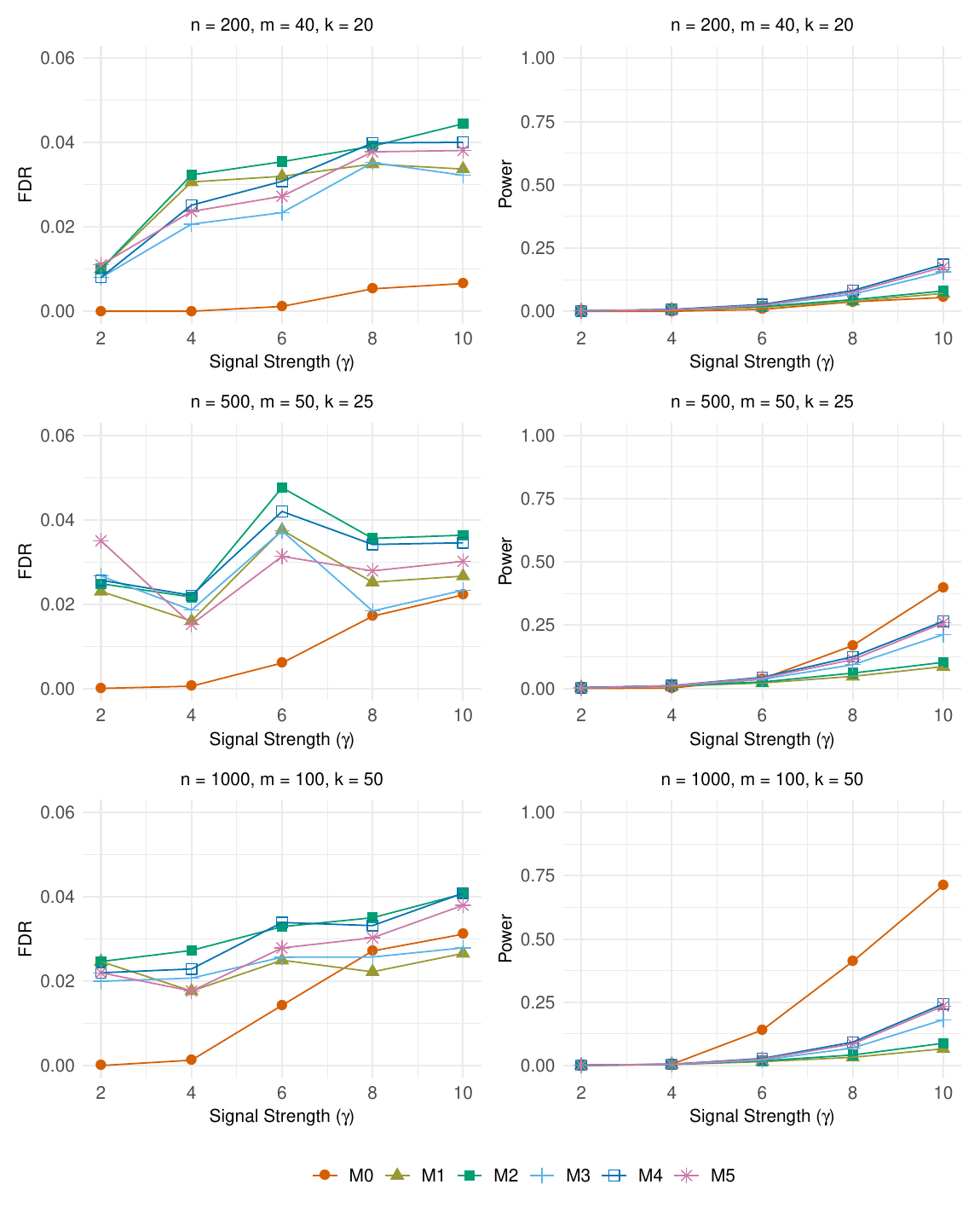}
    \caption{
        Comparison of false discovery rate and power across different settings $(n, m, k)$ with $\rho=0.9$ and $\alpha = 0.05$. Each row corresponds to a different simulation setting, and signal strength values ($\gamma = 2, 4, 6, 8, 10$) are shown on the x-axis.
    }
    
\end{figure}

\newpage

\begin{figure}[h!]  
    \centering
    \includegraphics[width=\textwidth]{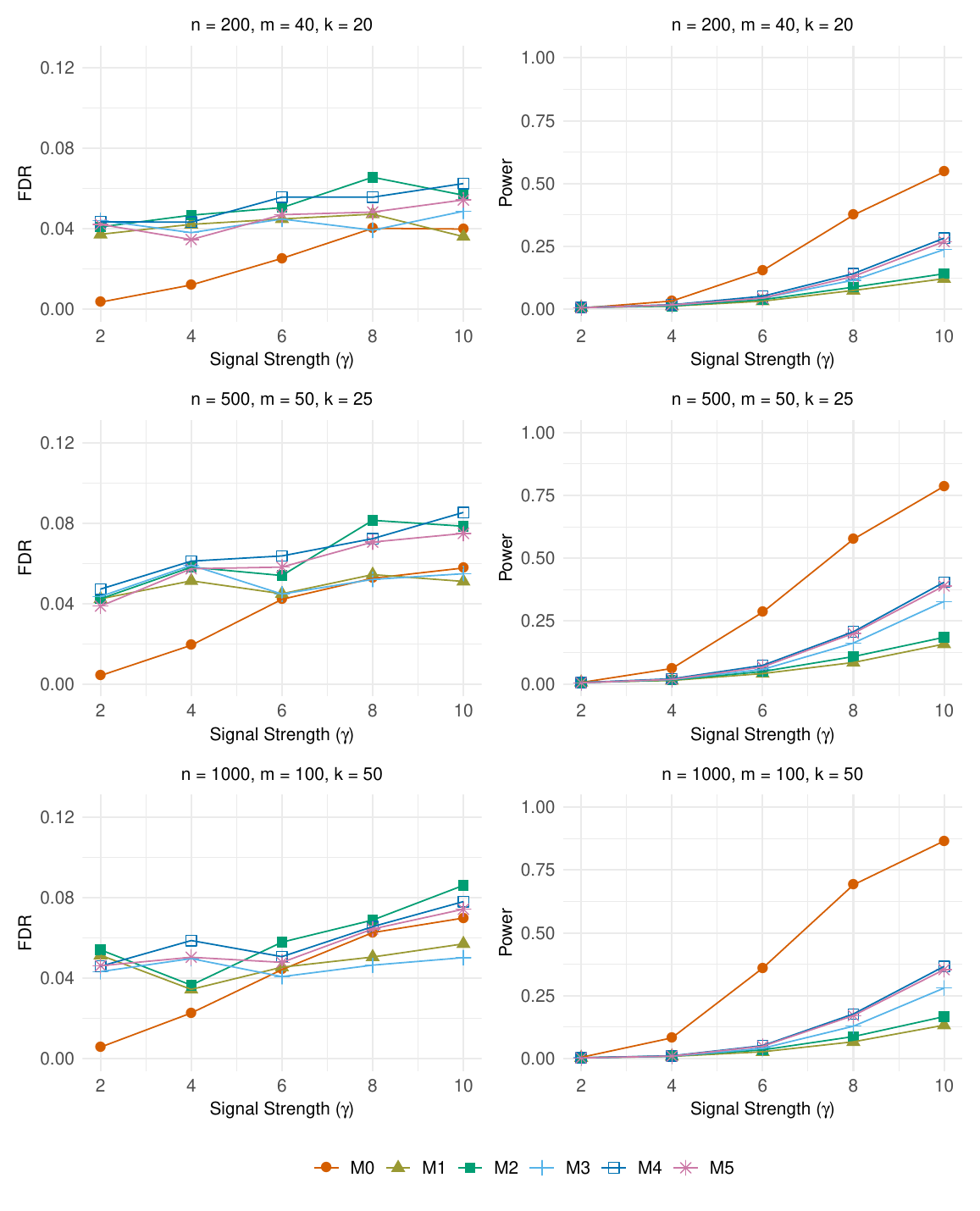}
    \caption{
        Comparison of false discovery rate and power across different settings $(n, m, k)$ with $\rho=0.9$ and $\alpha = 0.1$. Each row corresponds to a different simulation setting, and signal strength values ($\gamma = 2, 4, 6, 8, 10$) are shown on the x-axis.
    }
    
\end{figure}

\end{document}